\newtheorem{theorem}{Theorem}
\newtheorem{lemma}{Lemma}
\author[Mohsen Alambardar Meybodi et al.]{Mohsen Alambardar Meybodi\affiliationmark{1}
  \and Abolfazl Poureidi\affiliationmark{2}}
\title[On {[1,2]}-Domination in Interval and Circle Graphs ]{On $[1,2]$-Domination in Interval and Circle Graphs }
\affiliation{
  % one line per affiliation, no postal codes, grant numbers or similar
  Department of Applied Mathematics and Computer Science, Faculty of Mathematics and Statistics, University of Isfahan, Isfahan, Iran.\\
  Faculty of Mathematical Sciences, Shahrood University of Technology, Shahrood, Iran.}
\keywords{Dominating set, $[1, j]$-dominating set, Interval graph, Circle graph.}
\begin{document}
% This is only used if you are compiling for a volume before vol 25
% \publicationdetails{VOL}{2015}{ISS}{NUM}{SUBM}
% This is the new form of collecting the data, starting with vol 25
\publicationdata{vol. 26:3}{2024}{6}{10.46298/dmtcs.13194}{2024-03-08; 2024-03-08; 2024-06-13; 2024-08-22}{2024-09-19}
\maketitle
\begin{abstract}
	\vspace{0.2cm}
  A subset $S$ of vertices in a graph $G=(V, E)$ is a Dominating Set if each vertex in $V(G)\setminus S$ is adjacent to at least one vertex in $S$. Chellali et al. in 2013, by restricting the number of neighbors in $S$ of a vertex outside $S$, introduced the concept of $[1,j]$-dominating set.  A set $D \subseteq V$ of a graph $G = (V, E)$ is called a $[1,j]$-Dominating Set of $G$ if every vertex not in $D$ has at least one neighbor and at most $j$ neighbors in $D$. 
  The Minimum $[1,j]$-Domination problem is the problem of finding the minimum $[1,j]$-dominating set $D$. Given a positive integer $k$ and a graph $G = (V, E)$, the $[1,j]$-Domination Decision problem is to decide whether $G$ has a $[1,j]$-dominating set of cardinality at most $k$. A polynomial-time algorithm was obtained in split graphs for a constant $j$ in contrast to the Dominating Set problem which is NP-hard for split graphs. This result motivates us to investigate the effect of restriction $j$ on the complexity of $[1,j]$-domination problem on various classes of graphs. Although for $j\geq 3$, it has been proved that the minimum of  domination is equal to minimum $[1,j]$-domination in interval graphs, the complexity of finding the minimum $[1,2]$-domination in interval graphs is still outstanding. In this paper, we propose a polynomial-time algorithm for computing a minimum $[1,2]$-dominating set on interval graphs by a dynamic programming technique. Next, on the negative side, we show that the minimum $[1,2]$-dominating set problem on circle graphs is $NP$-complete.
  \end{abstract}
\section{Introduction}
One of the active research areas in graph theory is the study of the domination problem and its variants. Applying specific restrictions on the domination problem leads to various extensions of the domination problem. In independent domination, the additional restriction is that every vertex in dominating set, like $D$, is not adjacent to any other vertex in $D$. In total domination, the additional restriction is that every vertex in $D$ is adjacent to at least one vertex in $D$. In $[1,j]$-domination, the additional restriction is that every vertex not in $D$ is adjacent to at least one and at most $j$ vertices in $D$. Combinatorial and algorithmic results have been widely obtained on the various extensions of domination problems. A discussion of these results can be found in \cite{haynes1998fundamentals}.

\subsection{Definition and notation}\label{Preliminaries}
We begin with some terminology and notation.
\paragraph{Graph}
Let $G=(V, E)$ be a simple graph with vertex set $V$ and edge set $E$. The order of the graph $G$ is defined as $n = \vert V(G)\vert$. The open neighborhood of a vertex $v\in V$, denoted by $N(v)$, is defined as $\{u: \{u, v\} \in E\}$. Also, the closed neighborhood of $v$ is defined as $N[v]=N(v) \cup \{v\}$. We use $N(S)$ and $N[S]$ to denote the open and closed neighborhood $S$, for a set $S\subseteq V$, respectively. That is, $N[S]=\bigcup_{v\in S} N[v]$ and $N(S)=N[S]\setminus S$. The complete bipartite graph $K_{1;3}$ is called a claw, and a graph without a claw as an induced subgraph is called claw-free. 
\paragraph{Domination and {[1,j]-Domination}} A set $S \subseteq V$ is a dominating set if every vertex not in $S$ is adjacent to at least one vertex in $S$. The cardinality of the smallest dominating set, denoted by $\gamma(G)$, is called the domination number. Finding a minimum dominating set was one of the first problems shown  to be NP-hard in~\cite{garey1979computers}. A set $D \subseteq V$ is called a $[1, j]$-dominating set of $G$ if for each $v \in V \setminus D$ we have $1 \leq |N(v) \cap D| \leq j$, i.e. $v$ is adjacent to at least one but not more than $j$ vertices in $D$. Every graph has at least one $[1, j]$-dominating set since the set of all vertices $V$ is itself a $[1, j]$-dominating set. The size of the smallest $[1, j]$-dominating set of $G$ is denoted by $\gamma[1,j](G)$. In the special case where $j=2$, this domination is already known as Quasiperfect Domination. In the decision version of the $[1,j]$-dominating set problem, the input is a graph $G$ and a positive integer $k$ and the objective is to test whether there is a $[1,j]$-dominating set of size at most $k$. 

\paragraph{Interval Graph} Let $I =\{ I_1 , I_2 , \dots , I_n \}$ be a set of intervals on the real line. A graph $G = (V, E)$ is called an interval graph if each of its vertices can be associated with an interval in $I$ and two vertices are adjacent if and only if the corresponding intervals intersect. The set $I$ is called the interval representation of the interval graph $G$. The intervals in $I$ are indexed by increasing the right endpoint. Figure~\ref{Interval} shows an interval graph and its interval representation. 
\begin{figure}[h!]
	\centering
	\includegraphics[scale=.8]{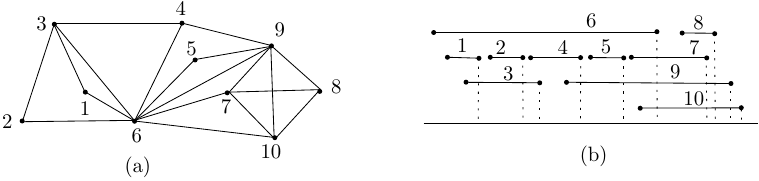}
	\caption{Illustrating (a)  an interval graph $G$ of order $10$, (b) a set of intervals corresponding to the vertex set of  $G$} \label{Interval}
\end{figure}

A linear-time algorithm was proposed by \cite{booth1976testing} to recognize an interval graph and the corresponding intervals of the vertices of an interval graph. The induced subgraphs of interval graphs are also interval graphs which is called hereditary property of the interval graphs~\cite{golumbic1980algorithmic}. A proper interval graph is an interval graph in which no interval is completely contained within another interval. If all the intervals have the same length, the corresponding interval graph is called a unit interval graph.

\paragraph{Circle Graph} A circle graph is a graph $G=(V,E)$ such that there is a one-to-one mapping between the vertices in $V$ and a set $C$ of chords in a circle. Two vertices in $V$ are adjacent if and only if the corresponding chords in $C$ intersect. $C$ is called the chord intersection model for $G$. Equivalently, the vertices of a circle graph can be placed in one-to-one correspondence with intervals such that two vertices are adjacent if and only if the corresponding intervals overlap, but neither contains the other. $I$ is called the interval representation of the circle graph. There exists a polynomial transformation between a representation of a circle graph, either the set of its chords or its interval representation. (See Figure~\ref{circle})

\begin{figure}[!h]
	\centering
	\includegraphics[scale=.7]{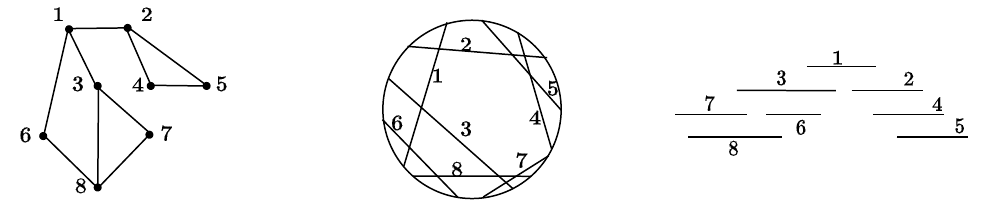}
	\caption{(a) The circle graph $G$ on $8$ vertices (b)The circle representation of $G$ (c)The interval representation of $G$} \label{circle}
\end{figure}

\subsection{Short review of $[1,j]$-Domination}

\cite{chellali20131} introduced the concept of a $[1, j]$-dominating set with name $(1,j)$-set. Although this concept for the constant $j=2$ was already known as quasiperfect domination and defined by~\cite{dejter2009quasiperfect}. Also,  the $[1, j]$-dominating set fits into the general framework of $(\rho, \sigma)$-sets of graphs, which was first introduced in \cite{telle1994complexity}. A $(\rho, \sigma)$-set is a set $D_{(\rho, \sigma)} \subseteq V$ such that for every vertex $v \in V$, we have $|N(v)\cap D_{(\rho, \sigma)}|\in \rho$ if $v \in D_{(\rho, \sigma)}$, and $|N(v)\cap D_{(\rho, \sigma)}|\in \sigma$ if $v \notin D_{(\rho, \sigma)}$. Discussion about the complexity of $[1,j]$-dominating set in various graphs has been done extensively. \cite{chellali20131} raised several open problems. One major open question is in what classes of graphs the  $\gamma[1,j](G)$ is equal to $\gamma(G)$. \cite{chellali20131}  showed that in $P_4$-free graphs, claw-free graphs, and caterpillars $\gamma[1,j](G)$ is equal to $\gamma(G)$. \cite{sharifani2017explicit} showed that $\gamma[1,j](G)=\gamma(G)$ when $G$ is a grid graph. It was proved in \cite{yang20141}, that the value of  $\gamma_{[1,2]}(G)$ in the classes of planar, bipartite, and triangle-free graphs can be equal to the entire set of vertices. So, $\gamma_{[1,2]}$ is not necessarily equal to $\gamma$ in these classes. \cite{etesami2019optimal}  showed that in interval graphs and permutation graphs $\gamma_{[1,3]}(G)=\gamma(G)$. Moreover, it was shown that there exists an interval graph $H$ for which $\gamma(H) < \gamma_{[1,2]}(H)$. Although they proved that for every unit interval graph $G$ we have $\gamma_{[1,2]}(G) = \gamma(G)$, the complexity of finding $\gamma_{[1,2]}$ in non-proper interval graphs remains unsolved. 

Another open question in \cite{chellali20131} was in which graph classes the $[1,j]$-dominating set problem is efficiently solvable. On the negative side, it was also proven that the problem is $NP$-complete even for bipartite graphs\cite{chellali20131}. In ~\cite{bishnu20161}, the $[1, j]$-dominating sets problem was shown to be $NP$-hard even for chordal and planar graphs. In \cite{etesami2019optimal} , the authors showed the complexity of the decision problem of whether the $\gamma_{[1,j]}(G)=\gamma(G)$ is an $NP$ problem. 

On the positive side, a linear-time algorithms proposed for computing $\gamma_{[1,2]}(G)$ when $G$ is a tree  \cite{goharshady20161}, a block graph \cite{sharifani2020linear} and a series-parallel graph \cite{sharifani2017linear}.  For a constant $j$, a polynomial-time algorithm running in roughly $O(n^j p(\log n))$ where $p$ is a polynomial function was obtained for $n$-vertex split graphs~\cite{bishnu20161}. This is in contrast to the orginal dominating set problem which is NP-hard for this class of graphs. Moreover in \cite{meybodi2020parameterized}, a lower bound for computing $[1,j]$-dominating set in the split graph was shown, i.e. there is no algorithm with running time $O(n^{j-\epsilon})$ where $\epsilon>0$ unless $P=NP$. The complexity of $[1,j]$-dominating set problem in various graph classes is summarized in Figure~\ref{Graphs1}. It should be noted that this summary is far from being comprehensive and several other classes of graphs exist. In this paper, we focus only on the circle and interval graphs. 

\begin{figure}[!ht]
	\centering
	\includegraphics[scale=.7]{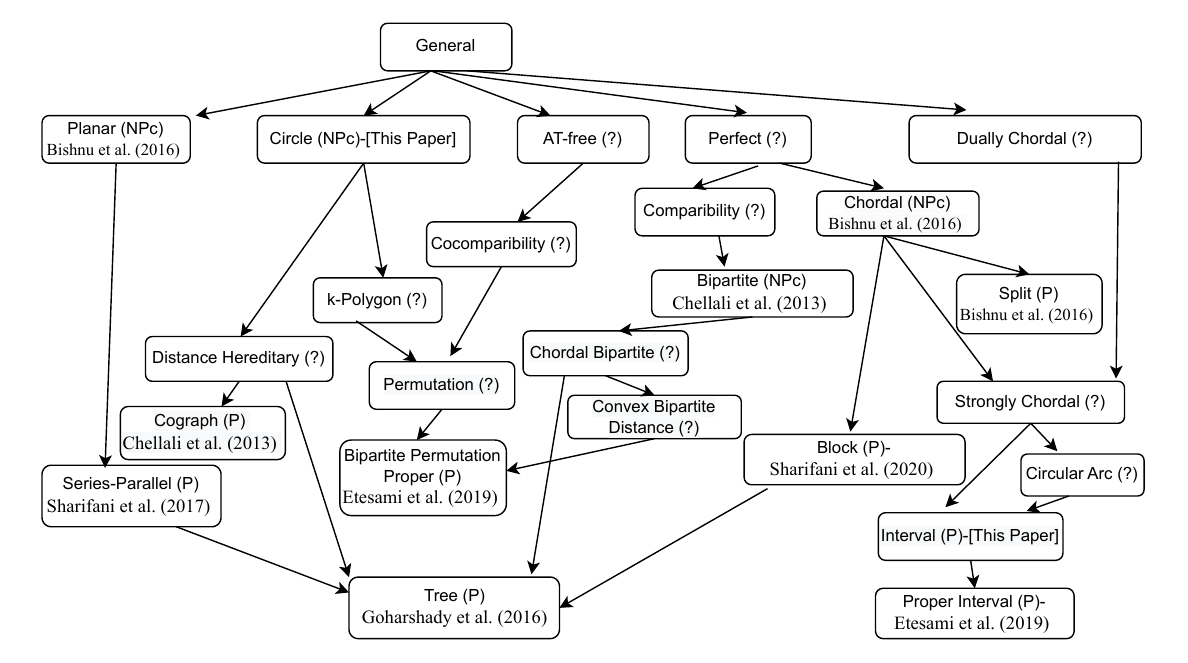}
	\caption{Inclusion relations among well-studied classes of graphs~\cite{tripathi2022complexity} - Complexity of $[1,j]$-dominating set restricted to various classes of graphs in a hierarchy of graphs - $NPc$ indicate that the problem is $NP$-complete, $P$ indicates the problem has a polynomial algorithm, and the question mark indicates the open problem.} \label{Graphs1}
\end{figure}

\section{Algorithm for interval graphs}
\cite{roberts1969indifference} showed that an interval graph is proper if and only if it contains no induced copy of $K_{1;3}$. It was shown in \cite{dejter2009quasiperfect} that for any $K_{1;3}$-free graph $\gamma(G)=\gamma_{[1,2]}(G)$. Thus, the following lemma can be easily concluded.
\\
\begin{lemma}
	For every proper interval graph $G$,  $\gamma(G)=\gamma_{[1,2]}(G)$.
\end{lemma}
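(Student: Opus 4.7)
The plan is to observe that this lemma is an immediate composition of the two facts recorded in the paragraph just above the statement. First I would invoke Roberts' characterization: a proper interval graph contains no induced copy of $K_{1;3}$, so in particular every proper interval graph is claw-free. Second I would invoke the cited result of \cite{dejter2009quasiperfect} that $\gamma(G)=\gamma_{[1,2]}(G)$ holds for every claw-free graph. Chaining the two implications delivers the lemma: proper interval $\Rightarrow$ $K_{1;3}$-free $\Rightarrow$ $\gamma=\gamma_{[1,2]}$.

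For a clean write-up I would also note the trivial half of the equality, namely $\gamma(G)\le \gamma_{[1,2]}(G)$, which holds in every graph because any $[1,2]$-dominating set is in particular a dominating set. The nontrivial content lies in the reverse inequality, and this is precisely where claw-freeness is used inside the Dejter argument: if $D$ is a minimum dominating set of a claw-free graph and some $v\notin D$ had three neighbors in $D$, then either those three neighbors are pairwise non-adjacent and together with $v$ induce a $K_{1;3}$, contradicting the hypothesis, or some pair among them is adjacent, in which case one of them can be deleted from $D$ while the remaining set is still dominating, contradicting the minimality of $|D|$. I would simply cite this step rather than reproving it.

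I do not foresee any substantive obstacle, since the lemma really is a one-line corollary of previously stated results. The only small subtlety worth flagging is that Roberts' equivalence reads ``interval graph $+$ claw-free $\Leftrightarrow$ proper interval,'' so the implication I need (proper interval $\Rightarrow$ claw-free) is just one direction of that equivalence restricted to interval graphs, which is exactly the hypothesis of the lemma. Once this logical chain is spelled out, the proof is complete.
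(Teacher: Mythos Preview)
Your proposal is correct and matches the paper's own argument exactly: the paper likewise derives the lemma as an immediate corollary by combining Roberts' characterization of proper interval graphs as the $K_{1;3}$-free interval graphs with the cited result that $\gamma=\gamma_{[1,2]}$ for all claw-free graphs. One caveat: your parenthetical sketch of the Dejter argument is flawed---if two of the three $D$-neighbors of $v$ are adjacent you cannot simply delete one of them, since it may be the unique vertex of $D$ dominating some other vertex---so you are right to cite rather than reprove that step.
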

\vspace{.5cm}
Another proof for the above lemma was provided in \cite{etesami2019optimal}. The authors also showed that for every interval graph $G$,   $\gamma(G)=\gamma_{[1,3]}(G)$ and there exists a graph $G$ such that $\gamma(G) \le \gamma_{[1,2]}(G)$. However, the complexity of finding a  $[1,2]$-dominating set in interval graphs remains open.

In this section, we propose an $O(n^4)$-time algorithm for computing $\gamma_{[1,2]}(G)$, where $G$ is an interval graph of order $n$. To present and prove the correctness of our algorithm, we need the following lemmas and notation.  
Let $G=(V,E)$ be an interval graph. \cite{ramalingam1988unified} proposed a numbering of the vertices of $G$ and stated the following result. 

\begin{lemma}[\cite{ramalingam1988unified}, Theorem 2.1]\label{theo:RR}
	For  each  interval graph $G=(V,E)$ of order $n$, 
	there is   a  numbering   $( 1, 2,\ldots, n)$ of vertices in $V$  such that    if    $ i k\in E$ and   $i<j<k$, then   $ j k\in E$. This numbering  of vertices  can be computed  in $O(|V|+|E|)$ time.
	
\end{lemma}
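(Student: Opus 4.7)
The plan is to exploit the interval representation of $G$ directly. By the Booth--Lueker linear-time recognition algorithm, we may assume that in $O(|V|+|E|)$ time we have produced intervals $I_v=[l_v,r_v]$ for every $v\in V$ with the property that $uv\in E$ if and only if $I_u\cap I_v\neq\emptyset$. Number the vertices $1,2,\ldots,n$ so that $r_1\le r_2\le\cdots\le r_n$, breaking ties arbitrarily. My claim is that this is the numbering required by the lemma.

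Next I would verify the ordering property. Fix $i<j<k$ with $ik\in E$, so $I_i\cap I_k\ne\emptyset$. Combined with $r_i\le r_k$, nonempty intersection forces $l_k\le r_i$. Since the vertices are sorted by right endpoint, $r_i\le r_j\le r_k$. From $l_k\le r_i\le r_j$ I get $l_k\le r_j$, and from $l_j\le r_j\le r_k$ I get $l_j\le r_k$. These two inequalities are exactly what is needed for $I_j\cap I_k\ne\emptyset$, hence $jk\in E$. So the desired transitivity property holds.

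Finally, for the running time, I would note that the intervals produced by the recognition algorithm can be assumed to have integer endpoints in $\{1,\ldots,2n\}$ (by replacing the real coordinates with their ranks), so sorting by right endpoint reduces to a counting/bucket sort and runs in $O(n)$. Combined with the $O(|V|+|E|)$ recognition step, the entire numbering is produced in $O(|V|+|E|)$ time.

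The main obstacle is not the combinatorial verification, which is a one-line endpoint check, but rather being precise about the linear-time claim: one must either argue from an integral endpoint representation so that counting sort applies, or extract the ordering directly from the PQ-tree data structure output by the recognition algorithm without paying an extra $\log n$ factor.
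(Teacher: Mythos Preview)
Your argument is correct: ordering the vertices by increasing right endpoint of an interval representation gives exactly the required property, and your endpoint-chasing verification is sound. The linear-time justification via ranking the endpoints and bucket-sorting is also fine.

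As for comparison with the paper: there is nothing to compare against, because the paper does not prove this lemma at all. It is quoted verbatim as Theorem~2.1 of Ramalingam and Rangan (1988) and used as a black box. Your proof is essentially the standard one underlying that reference, so you have simply supplied what the paper outsources to the citation.
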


In the rest of this section  without loss of generality, we assume that  $G=(V,E)$  is  an   interval graph  of order $n$  with a   numbering  $(1,2,\ldots , n)$ of vertices of $G$ satisfying the condition  of Lemma~\ref{theo:RR}.   Let   $a$   and  $b$  be  integers such that $1\leq   a \leq  b \leq n$.
We  introduce   the following  notation.

\begin{itemize}
	
	\item [$\bullet$]
	$[a, b]=\{ a,a+1,\ldots,b\}$, 
	\item [$\bullet$]
	$(  a,b]=[ a,b]\setminus\{a\}$,
	\item [$\bullet$]
	$[ a, b)=[  a, b]\setminus\{b\}$,  
	\item [$\bullet$]
	$( a, b)=[ a,b]\setminus\{ a,b\}$, 
	\item [$\bullet$]
	$G[a, b]=G[S]$, where $S=[a, b]$, 
	\item [$\bullet$]
	$\mathtt{low} (a)=\min  N[ a] $,  where $N[a]$ is the closed neighborhood of $a$,

	\item [$\bullet$]
	$\mathtt{low} ( a, b)=\min  \{  \mathtt{low} (k):k\in [ a, b]\} $,

	\item [$\bullet$]
	$\mathtt{maxlow}( a)=\max\{\mathtt{low}(k): \mathtt{low}( a)\leq k\leq  a \}$.
\end{itemize}    

Let $i\in [1,n]$, $j\in [1,i)$, $k\in [1,j]$ and $l\in [1,k]$.

\begin{itemize}
	
	\item [$\bullet$]
	$\gamma_{[1,2]} (  i)=\min\{|D|:D$ is a $[1,2]$-dominating  set   of   $G[ 1, i]   \}$,

	\item [$\bullet$]
	$\gamma^0_{[1,2]} (  i)=\min\{|D|:D$ is a $[1,2]$-dominating  set   of   $G[ 1, i] $    (if exists)  such that $i\notin D  \}$,

	\item [$\bullet$]
	$\gamma^0_{[1,2]} ( j, i)=\min\{|D|:D$ is a $[1,2]$-dominating  set   of   $G[ 1, i] $    (if exists)  such that $x\notin D$   for all $x\in [j,i]  \}$,

	\item [$\bullet$]
	$\gamma^0_{[1,2]} ( j,i:k' )=\min\{|D|:D$ is a $[1,2]$-dominating  set   of   $G[ 1, i] $    (if exists)  such that  $k' \in D$  and $x\notin D  $   for all $x\in[j,i]\setminus\{k' \}\}$,   where $k'\in [j,i)$,

	\item [$\bullet$]
	$\gamma^1_{[1,2]} (  i )=\min\{|D|:D$ is a $[1,2]$-dominating  set   of   $G[ 1, i] $   (if exists)  such that $i \in D  \}$,

	\item [$\bullet$]
	$\gamma^1_{[1,2]} ( j,i:i )=\min\{|D|:D$ is a $[1,2]$-dominating  set   of   $G[ 1, i] $   (if exists)  such that $i \in D$   and   $x\notin D$ for all  $x\in [j,i)  \}$,

	\item [$\bullet$]
	$\gamma^1_{[1,2]} ( k,i: i,j )=\min\{|D|:D$ is a $[1,2]$-dominating  set   of   $G[ 1, i] $   (if exists)  such that $i,j\in D$   and   $x\notin D$ for all  $x\in [k,i)\setminus\{j\} \}$,
	
	\item [$\bullet$]
	$\gamma^1_{[1,2]} ( l,i: i,j,k )=\min\{|D|:D$ is a $[1,2]$-dominating  set   of   $G[ 1, i] $   (if exists)  such that $i,j,k\in D$   and   $x\notin D$ for all  $x\in [l,i)\setminus\{j,k\} \}$, 
	
	\item [$\bullet$]
	$\gamma^{11}_{[1,2]} ( l,i: i,j,k )=\min\{|D|:D$ is a $[1,2]$-dominating  set   of   $G[ 1, i] $   (if exists)  such that $i,j,x\in D$  for all $x\in [l,k]$ and   $y\notin D$ for all  $y\in  (k,i)\setminus\{j \} \}$. 
	
\end{itemize}

We  get that   $\mathtt{low}( i )\leq   \mathtt{maxlow}( i)\leq  i$.
An example of an  interval graph is shown in Figure~\ref{fig-IG}.
Also, the values     
$\mathtt{low}( i)$  and $ \mathtt{maxlow}( i)$   
are computed    for each  vertex  $ i$ of the interval graph illustrated in Figure~\ref{fig-IG}.  

\begin{figure}[htb]
	\begin{center}
		\includegraphics[scale=.8]{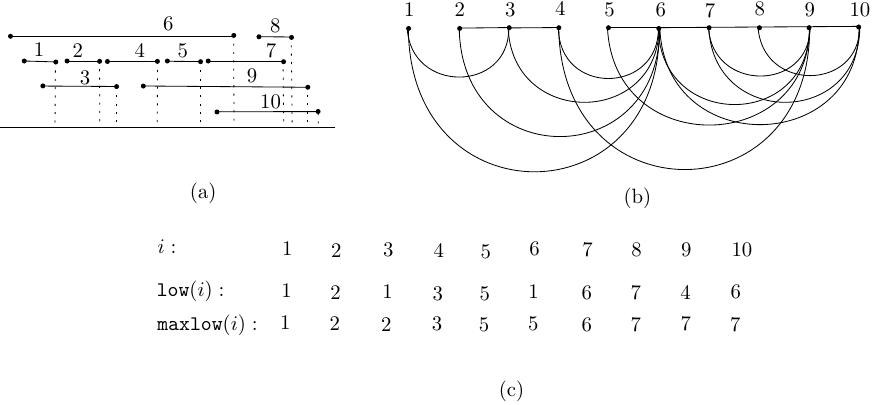}
		\caption{(a) A  set of intervals corresponding to the vertex set of the interval graph $G$; (b)
			a   numbering  $(1,2,\ldots , 10)$ of the vertices of $G$ satisfying the condition  of Corollary~\ref{theo:RR};  and 
			(c)  values $\mathtt{low}(i)$ and  $\mathtt{maxlow}(i)$  for all $ i\in [ 1, 10 ]$}.
		\label{fig-IG}
	\end{center}
\end{figure}

\begin{lemma}[\cite{poureidi2022algorithm}]\label{lem1}
	Let   $ i\in [1,n]$. 
	\begin{itemize}
		\item [(i)]
		The set   $ [\mathtt{maxlow}( i), i]$ is a clique of $G$. 
		\item [(ii)]  
		There is a vertex $k\in [\mathtt{maxlow}( i), i]$ such that   $kl\notin E$ for each  vertex $l\in [ 1,\mathtt{maxlow}( i))$.

	\end{itemize}
\end{lemma}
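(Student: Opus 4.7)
The plan is to prove both parts directly from the monotonicity property supplied by Lemma~\ref{theo:RR} (namely, if $ik\in E$ and $i<j<k$, then $jk\in E$) together with the definitions of $\mathtt{low}$ and $\mathtt{maxlow}$. A useful preliminary observation is that $\mathtt{maxlow}(i)\ge \mathtt{low}(i)$, since the candidate $k=i$ contributes $\mathtt{low}(i)$ to the set over which the maximum is taken; hence $[\mathtt{maxlow}(i),i]\subseteq[\mathtt{low}(i),i]$, and for every $q\in[\mathtt{maxlow}(i),i]$ the definition of $\mathtt{maxlow}(i)$ gives $\mathtt{low}(q)\le\mathtt{maxlow}(i)$.

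For part~(i), I would fix $a=\mathtt{maxlow}(i)$, pick arbitrary $p,q\in[a,i]$ with $p<q$, and show $pq\in E$. Let $r=\mathtt{low}(q)=\min N[q]$; by the observation above, $r\le a\le p<q$, and $rq\in E$. If $r=p$ we are done; otherwise $r<p<q$ and Lemma~\ref{theo:RR} applied to the edge $rq$ and the intermediate index $p$ yields $pq\in E$. Since $p,q$ were arbitrary, $[a,i]$ induces a clique.

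For part~(ii), I would exhibit an explicit witness: choose any $k\in[\mathtt{low}(i),i]$ that attains the maximum in the definition of $\mathtt{maxlow}(i)$, so $\mathtt{low}(k)=\mathtt{maxlow}(i)=a$. Membership $k\in[a,i]$ is immediate because $k\le i$ by choice and $k\ge\mathtt{low}(k)=a$. Because $\mathtt{low}(k)=\min N[k]=a$, no vertex with index strictly less than $a$ lies in $N[k]$; in particular $kl\notin E$ for every $l\in[1,a)=[1,\mathtt{maxlow}(i))$.

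The argument is essentially bookkeeping with the definitions, so the only mild subtlety is keeping the index bounds consistent (in particular the boundary case $r=p$ in part~(i), where Lemma~\ref{theo:RR} is not directly applicable and must be handled separately) and making sure, in part~(ii), that the maximiser $k$ actually lies in $[\mathtt{maxlow}(i),i]$ rather than merely in $[\mathtt{low}(i),i]$ — which is precisely where the inequality $k\ge\mathtt{low}(k)$ is used.
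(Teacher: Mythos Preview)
Your argument is correct. The paper itself does not supply a proof of this lemma --- it is quoted from \cite{poureidi2022algorithm} and stated without proof --- so there is nothing to compare against here; your direct derivation from the Ramalingam--Rangan ordering property and the definitions of $\mathtt{low}$ and $\mathtt{maxlow}$ is exactly the natural one, and the edge cases you flag (the $r=p$ boundary in part~(i) and the verification $k\ge\mathtt{low}(k)$ in part~(ii)) are the only points requiring any care.
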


\begin{lemma} \label{lem2}
	Let  $1\leq i'\leq i\leq n$. 
	\begin{itemize}
		
		\item[(i)]
		$\gamma^0_{[1,2]} (   i)=\gamma^0_{[1,2]} ( i, i)
		$.
		
		\item[(ii)]
		If    $\mathtt{maxlow} ( i)<i'$, then 
		$\gamma^0_{[1,2]} ( i', i)=
		\min\{
		\gamma^1_{[1,2]} ( \mathtt{low} ( j+1,i),j:j),
		\gamma^1_{[1,2]} ( \mathtt{low} ( j+1,i),j: j,k ) 
		: j\in  [\mathtt{maxlow} ( i) ,i'),
		k\in [\mathtt{low} ( j+1,i),j)\} 
		$, otherwise,   $\gamma^0_{[1,2]} ( i', i)$ is not defined.

	\end{itemize} 
\end{lemma}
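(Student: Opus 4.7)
The plan is to prove part (i) by unpacking definitions, then split part (ii) into the ``otherwise'' clause (proving non-existence) and the equality (proving both inequalities of the minimum). Part (i) is immediate, because the defining constraint of $\gamma^0_{[1,2]}(i, i)$ is that $x \notin D$ for all $x \in [i, i] = \{i\}$, which is exactly the constraint $i \notin D$.

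For the ``otherwise'' clause in (ii), suppose $\mathtt{maxlow}(i) \geq i'$. Lemma~\ref{lem1}(ii) furnishes a vertex $k^* \in [\mathtt{maxlow}(i), i] \subseteq [i', i]$ that is non-adjacent to every vertex in $[1, \mathtt{maxlow}(i))$. Any candidate $D$ must exclude $k^*$ since $k^* \in [i', i]$, yet must dominate it; but $N(k^*) \subseteq [\mathtt{maxlow}(i), i] \subseteq [i', i]$ is disjoint from $D$, so no valid $D$ exists and $\gamma^0_{[1,2]}(i', i)$ is undefined.

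For the equality when $\mathtt{maxlow}(i) < i'$, the $\geq$ direction takes an optimal $D$ and sets $j := \max(N(i) \cap D)$. I will show $j \in [\mathtt{maxlow}(i), i')$: the upper bound $j < i'$ holds because $[i', i] \cap D = \emptyset$; for the lower bound, if $j < \mathtt{maxlow}(i)$ then the witness $k^*$ of Lemma~\ref{lem1}(ii) would force some dominator of index $\geq \mathtt{maxlow}(i)$ in $D$, and by the clique property of Lemma~\ref{lem1}(i) that dominator would be a neighbor of $i$ strictly exceeding $j$---a contradiction. Next, $D \cap (j, i] = \emptyset$, since any element of $(j, i')$ lies in $[\mathtt{low}(i), i)$, and the Ramalingam numbering makes the latter entirely adjacent to $i$, again contradicting the maximality of $j$. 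Finally, choosing $m^* \in [j+1, i]$ with $\mathtt{low}(m^*) = \mathtt{low}(j+1, i)$, every vertex of $[\mathtt{low}(j+1, i), j]$ is a neighbor of $m^*$, so the $[1,2]$-constraint at $m^*$ combined with $j \in D$ permits at most one further element $k$ of $D$ inside $[\mathtt{low}(j+1, i), j)$. Restricting $D$ to $[1, j]$ then realizes one of the two sub-problem values on the right-hand side.

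For the $\leq$ direction, I take a $D$ attaining one of the sub-problem values and verify that, viewed inside $[1, i]$, it is a $[1,2]$-dominating set of $G[1, i]$ with $[i', i] \cap D = \emptyset$. Because $D \subseteq [1, j]$ and $j < i'$, the exclusion is automatic, and for $v \in [1, j] \setminus D$ the neighborhood count in $D$ is unchanged when passing from $G[1, j]$ to $G[1, i]$. The key verification is $|N(v) \cap D| \in \{1, 2\}$ for $v \in [j+1, i]$: the chain $N(v) \cap D \subseteq [\mathtt{low}(v), j] \cap D \subseteq [\mathtt{low}(j+1, i), j] \cap D \in \{\{j\}, \{j, k\}\}$ yields the upper bound, and the inequality $\mathtt{low}(v) \leq \mathtt{maxlow}(i) \leq j$ (valid because $j \geq \mathtt{maxlow}(i) \geq \mathtt{low}(i)$ places $v$ in $(\mathtt{low}(i), i]$) supplies the lower bound by guaranteeing $j \in [\mathtt{low}(v), j] \cap D$. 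I expect the main technical obstacle to be this careful bookkeeping between $\mathtt{low}$, $\mathtt{maxlow}$, and the Ramalingam numbering, especially in establishing $j \geq \mathtt{maxlow}(i)$ in the forward direction and cleanly extending the sub-problem's $D$ in the reverse.
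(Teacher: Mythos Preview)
Your proposal is correct and follows essentially the same approach as the paper: both directions hinge on identifying the largest element $j$ of $D$, showing $j\in[\mathtt{maxlow}(i),i')$ via Lemma~\ref{lem1}, and using the vertex realizing $\mathtt{low}(j{+}1,i)$ to bound $|D\cap[\mathtt{low}(j{+}1,i),j)|\le 1$. Your choice $j:=\max(N(i)\cap D)$ is a harmless variant of the paper's $j:=\max D$ (you then prove they coincide), and your reverse direction is in fact spelled out a bit more carefully than the paper's, which only explicitly checks vertices in $[i',i]$ rather than all of $(j,i]$.
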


\begin{proof}
	It is clear that the case  (i) holds. Now, we  prove the case (ii).   Let $ D$ be  a $[1,2]$-dominating  set   of   $G[ 1, i] $  with minimum cardinality,     such that $a\notin D$   for all $a\in [i',i]   $. Thus,    $|D|=\gamma^0_{[1,2]} ( i', i)$. 
	Since $i\notin D$,   at  least one, and at most two of vertices  adjacent to $i$ are in $D$.   If $i'\leq \mathtt{low}(i) $,     then   there is no vertex in $D$ to dominate $i$, that is,      $D$ is not an  $[1,2]$-dominating  set   of   $G[ 1, i] $.  Hence,  $\mathtt{low}(i)<i'$. 
	By case (ii) of Corollary~\ref{lem1},        
	there is a vertex $x\in [\mathtt{maxlow}( i), i]$ such that   $xy\notin E$ for each  vertex $y\in [ 1,\mathtt{maxlow}( i))$.
	Therefore,  if $a\notin D$   for all $a\in   [\mathtt{maxlow}( i), i]$,  then  $D$ is not a $[1,2]$-dominating  set   of   $G[ 1, i] $  and  so  at least one of vertices   of   $  [\mathtt{maxlow}( i), i)$  
	is in $D$.
	Since  $a\notin D$   for all $a\in [i',i]   $, we obtain that $\mathtt{maxlow}(i)<i'$  and  at least one of vertices   of   $  [\mathtt{maxlow}( i), i')$  
	is in $D$.
	Assume that $j\in    [\mathtt{maxlow}( i), i')$ is a vertex in $D$   such that $a\notin D$   for all $a\in (j,i ]$.  We obtain that $D  $ is a $[1,2]$-dominating  set of $G[1,j ] $.    
	Let $b\in     (j,i]$ be a vertex   such that  $\mathtt{low}(b)\leq \mathtt{low}(a)$  for all  $a\in (j,i ]$,   that is,  $\mathtt{low}(b)=\mathtt{low}(j+1,i)$.    See Figure~\ref{fig12}.   Clearly,   $   \mathtt{low}(j+1,i)\leq \mathtt{low}( i)$. 
	Because  $b\notin D$ and $j\in D$, 
	at most one  of vertices of  $[\mathtt{low}(b), j)$   is  in $D$. 
	Hence, either  no vertex of    $[\mathtt{low}(b), j)$   is in $D$,   see Figure~\ref{fig12}(a),   or     exactly   one  vertex of    $[\mathtt{low}(b), j)$   is in $D$,  see Figure~\ref{fig12}(b). In the following we consider these two cases.

	\begin{figure}[htb]
		\begin{center}
			\includegraphics[width=   \textwidth]{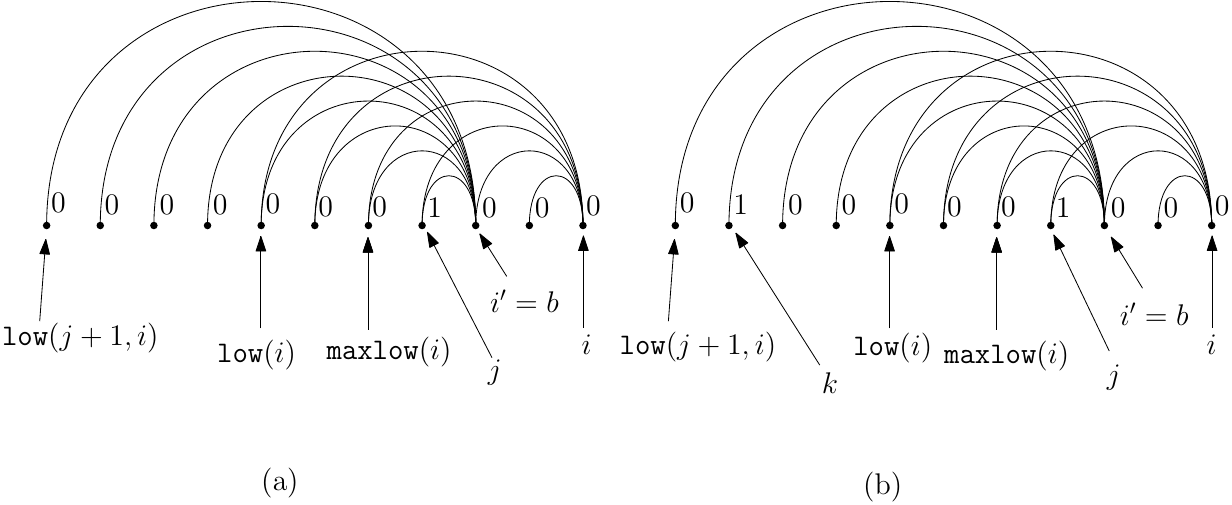}
			\caption{Illustrating   an    $[1,2]$-dominating  set $D$  of   $G[ 1, i] $       such that $a\notin D$   for all $a\in [i',i]   $; (a)    no vertex of    $[ \mathtt{low} ( j+1,i), j)$   is in $D$  and (b)    exactly   one  vertex      $k\in [  \mathtt{low} ( j+1,i), j)$   is in $D$.   Note that a vertex with label 1 is in $D$  and a vertex with label 0 is not in $D$.} 
			\label{fig12}
		\end{center}
	\end{figure}

	\begin{itemize}
		\item [(a)]
		Assume that  no vertex of    $[\mathtt{low}(b), j)$   is in $D$. Hence, $D$ is an  
		$[1,2]$-dominating  set   of   $G[ 1, j] $      such that $j \in D$   and   $a\notin D$ for all  $a\in [\mathtt{low}(b),j)   $  and so  $|D|\leq \gamma^1_{[1,2]} ( \mathtt{low} ( j+1,i),j:j)$. 
		\item [(b)]
		Assume that    exactly   one  vertex      $k\in [\mathtt{low}(b), j)$   is in $D$.   Hence, 
		$D$ is an  
		$[1,2]$-dominating  set   of   $G[ 1, j] $      such that $j,k \in D$   and   $a\notin D$ for all  $a\in [\mathtt{low}(b),j) \setminus\{k\} $  and so 
		$|D|\leq \gamma^1_{[1,2]} ( \mathtt{low} ( j+1,i),j:j,k)$. 
		
	\end{itemize}

	\noindent It follows   from (a)--(b)  that   $ \gamma^0_{[1,2]} ( i', i)=|D| \leq 
	\min\{
	\gamma^1_{[1,2]} ( \mathtt{low} ( j+1,i),j:j),
	\gamma^1_{[1,2]} ( \mathtt{low} ( j+1,i),j: j,k ) 
	: j\in  [\mathtt{maxlow} ( i) ,i'),
	k\in [\mathtt{low} ( j+1,i),j)\} 
	$.

	Conversely,   let  $j\in    [\mathtt{maxlow}( i), i')$.  Assume that $S$ is an  $[1,2]$-dominating  set   of   $G[ 1, j] $ with minimum cardinality      such that $j\in S$   and   $a\notin S$ for all  $a\in [ \mathtt{low} ( j+1,i),j)  $. So,    $|S|=\gamma^1_{[1,2]} (  \mathtt{low} ( j+1,i),j:j )$.  Since   $j\in    [\mathtt{maxlow}( i), i') $ and    $a\notin S$ for all  $a\in [ \mathtt{low} ( j+1,i),j)  $,   each vertex of $[i',i]$ is  exactly  dominated by   $j$. 
	We  obtain that $S$ is an  $[1,2]$-dominating  set   of   $G[ 1, i] $  such that $a\notin S$   for all $a\in [i',i]$,  that is,    $ |S|\leq   \gamma^0_{[1,2]} ( i', i)$.   
	Let $k\in   [\mathtt{low} ( j+1,i),j)$. 
	Assume that $S'$ is an  $[1,2]$-dominating  set   of   $G[ 1, j] $ with minimum cardinality      such that $j,k\in S'$   and   $a\notin S'$ for all  $a\in [ \mathtt{low} ( j+1,i),j) \setminus\{k\} $. So,    $|S'|=\gamma^1_{[1,2]} (  \mathtt{low} ( j+1,i),j:j,k )$.  Since   $j\in    [\mathtt{maxlow}( i), i') $, $k\in   [\mathtt{low} ( j+1,i),j)\cap S'$ and    $a\notin S'$ for all  $a\in [ \mathtt{low} ( j+1,i),j) \setminus\{k\} $,   each vertex of $[i',i]$ is dominated by at least one and at most two vertices of $S'$. 
	We  obtain that $S'$ is an  $[1,2]$-dominating  set   of   $G[ 1, i] $  such that $a\notin S'$   for all $a\in [i',i]$,  that is,    $ |S'|\leq   \gamma^0_{[1,2]} ( i', i)$.   
	Hence,    $
	\min\{
	\gamma^1_{[1,2]} ( \mathtt{low} ( j+1,i),j:j)=|S|,
	\gamma^1_{[1,2]} ( \mathtt{low} ( j+1,i),j: j,k )=|S'| 
	: j\in  [\mathtt{maxlow} ( i) ,i'),
	k\in [\mathtt{low} ( j+1,i),j)\} \leq  \gamma^0_{[1,2]} ( i', i)
	$. 
	This completes the proofs  of the case (ii) and      the lemma. 
\end{proof}

\begin{lemma} \label{lem3}
	Let   $i\in[1,n]$,   $j\in[1,i)$,  $k\in  [j,i)$   and   $b=\min\{\mathtt{low} ( k+1,i),j\}$. 
	If $ \mathtt{maxlow}(i)\leq k $, then $ \gamma^0_{[1,2]} ( j,i:k) =
	\min\{
	\gamma^1_{[1,2]} (b,k:k),
	\gamma^1_{[1,2]} ( b,k:  k,l ) 
	: 
	l\in [b,j)\} 
	$,   otherwise,      $\gamma^0_{[1,2]} ( j,i:k)$ is not defined.  
\end{lemma}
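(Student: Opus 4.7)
My plan mirrors the structure of the proof of Lemma~\ref{lem2}: first dispose of the ``undefined'' case, and then establish the equality as two matching inequalities. Two tools from the Ramalingam--Reddy numbering will carry most of the weight: by Lemma~\ref{lem1}(i), $[\mathtt{maxlow}(i),i]$ is a clique; and for $u<v$ in this numbering, $uv\in E$ if and only if $u\geq \mathtt{low}(v)$ (immediate from the definition of $\mathtt{low}$ together with Lemma~\ref{theo:RR}).

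For the undefined case ($\mathtt{maxlow}(i)>k$), I apply Lemma~\ref{lem1}(ii) to produce a vertex $x\in [\mathtt{maxlow}(i),i]$ whose only neighbors in $G[1,i]$ lie in $[\mathtt{maxlow}(i),i]$. Since $k<\mathtt{maxlow}(i)\leq x$, both $x$ and all its neighbors belong to $(k,i]\subseteq [j,i]\setminus\{k\}$, so no set $D$ respecting the constraints can dominate $x$.

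In the equality case ($\mathtt{maxlow}(i)\leq k$) the clique property gives $kv\in E$ for every $v\in (k,i]$. For the first inequality, I take an optimal $D$ realizing $\gamma^0_{[1,2]}(j,i:k)$. The constraint $D\cap([j,i]\setminus\{k\})=\emptyset$ forces $D\subseteq[1,k]$ with $D\cap[j,k)=\emptyset$, and $D$ is then automatically a $[1,2]$-dominating set of $G[1,k]$ with $k\in D$. The real work is the at-most-two condition on each $v\in(k,i]$: by the adjacency rule, the $D$-neighbors of $v$ are $\{k\}\cup (D\cap[\mathtt{low}(v),k))$, so $|D\cap[\mathtt{low}(v),j)|\leq 1$; taking $v$ to minimize $\mathtt{low}(v)$ collapses this to $|D\cap[b,j)|\leq 1$ with $b=\min\{\mathtt{low}(k+1,i),j\}$. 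Splitting on whether $D\cap[b,j)$ is empty or a singleton $\{l\}$ identifies $D$ as a candidate for $\gamma^1_{[1,2]}(b,k:k)$ or for $\gamma^1_{[1,2]}(b,k:k,l)$, respectively.

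For the reverse inequality I show that any minimizer $S$ of one of the terms on the right-hand side is itself a valid candidate for $\gamma^0_{[1,2]}(j,i:k)$: the prescribed vanishing on $[b,k)$ (either complete, or except at a single $l\in[b,j)$) combined with $S\subseteq[1,k]$ forces $S\cap([j,i]\setminus\{k\})=\emptyset$; every $v\in(k,i]$ is dominated by $k$ via the clique; and any second neighbor of such $v$ inside $S$ must sit in $S\cap[\mathtt{low}(v),k)\subseteq S\cap[b,k)$, which has at most one element, so the two-neighbor cap is preserved automatically. I expect the main technical hurdle to be the clean translation of the at-most-two constraint on $(k,i]$ into the single condition $|D\cap[b,j)|\leq 1$: making the definition of $b$ cover both the ``$\mathtt{low}(k+1,i)<j$'' and ``$\mathtt{low}(k+1,i)\geq j$'' regimes uniformly, while using only the Ramalingam--Reddy adjacency rule, is the least mechanical step.
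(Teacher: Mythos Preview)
Your proposal is correct and follows essentially the same route as the paper's proof: both dispose of the undefined case via Lemma~\ref{lem1}(ii), use the clique property of $[\mathtt{maxlow}(i),i]$ to guarantee that $k$ dominates every vertex of $(k,i]$, and then translate the at-most-two constraint on each $v\in(k,i]$ into the single bound $|D\cap[b,j)|\leq 1$ by minimizing $\mathtt{low}(v)$ over $(k,i]$; the resulting case split and the converse verification are identical in spirit. Your explicit articulation of the adjacency rule ``$uv\in E$ iff $u\geq\mathtt{low}(v)$ for $u<v$'' makes the bookkeeping a bit cleaner than in the paper, but there is no substantive difference in strategy.
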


\begin{proof}
	Let $ D$ be  a $[1,2]$-dominating  set   of   $G[ 1, i] $  with minimum cardinality   such that    $k \in D$  and $a\notin D  $   for all $a\in[j,i]\setminus\{k \}$. So,    $|D|=\gamma^0_{[1,2]} ( j, i:k)$. 
	Since $a\notin D$  for all $a\in (k,i]$,  it obtains that $D$ is an  $[1,2]$-dominating  set   of   $G[ 1, k] $. 
	By the case (ii) of Corollary~\ref{lem1},        
	there is a vertex $x\in [\mathtt{maxlow}( i), i]$ such that   $xy\notin E$ for each  vertex $y\in [ 1,\mathtt{maxlow}( i))$.
	Thus,  if    $k<\mathtt{maxlow}(i)$,     then  $D$ is not an  $[1,2]$-dominating  set   of   $G[ 1, i] $,  that is,   $\gamma^0_{[1,2]} ( j,i:k)$ is not defined.   Assume  that   $ \mathtt{maxlow}(i)\leq k $.  
	Let $c\in     (k,i]$ be a vertex   such that  $\mathtt{low}(c)\leq \mathtt{low}(a)$  for all  $a\in (k,i ]$,   that is,  $\mathtt{low}(c)=\mathtt{low}(k+1,i)$.    See Figure~\ref{fig13}.    
	Recall that  $b=\min\{\mathtt{low} ( k+1,i),j\}$  and  $a\notin D  $   for all $a\in[j,i]\setminus\{k \}$.   If  $ \mathtt{low} ( k+1,i)< j $, that is $b=\mathtt{low} ( k+1,i)$,  then 
	because  $c\notin D$ and $k\in D$, 
	at most one  of vertices of  $[b,j)$   is  in $D$, otherwise, $b=j$ and   no vertex of    $[b, k)$   is in $D$.    
	Hence, either  no vertex of    $[b, k)$   is in $D$,   see Figure~\ref{fig13}(a),   or     exactly   one  vertex of    $[b, j)$   is in $D$  (if  $b< j$),  see Figure~\ref{fig13}(b). In the following we consider these two cases.

	\begin{figure}[htb]
		\begin{center}
			\includegraphics[width=   \textwidth]{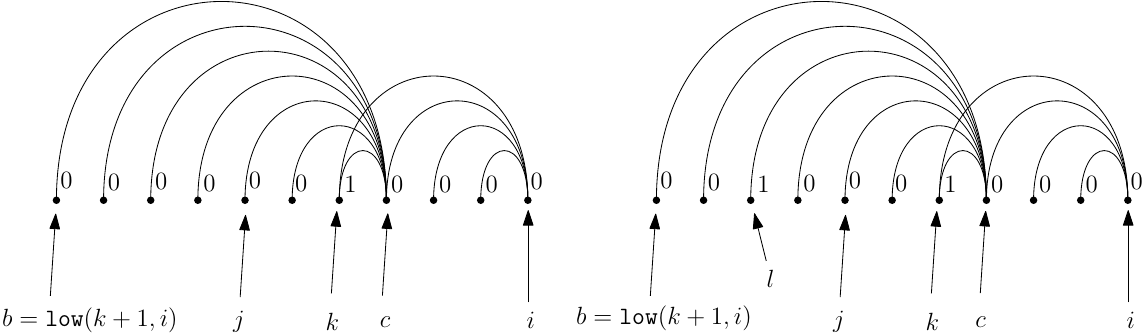}
			\caption{Illustrating   an    $[1,2]$-dominating  set $D$  of   $G[ 1, i] $    
				such that   $k \in D  $  and $a\notin D  $   for all $a\in[j,i]\setminus\{k \}$;  (a)    no vertex of    $[ b, k)$   is in $D$  and (b)    exactly   one  vertex      $l\in [  b, j)$   is in $D$.   Note that a vertex with label 1 is in $D$  and  a vertex with label 0  is  not in $D$.} 
			\label{fig13}
		\end{center}
	\end{figure}

	\begin{itemize}
		\item [(a)]  Assume that   no vertex of    $[b, k)$   is in $D$.
		Hence, $D$ is an  
		$[1,2]$-dominating  set   of   $G[ 1, k] $      such that $k \in D$   and   $a\notin D$ for all  $a\in [b,k)   $  and so  $|D|\leq \gamma^1_{[1,2]} ( b,k:k)$. 
		
		\item   
		[(b)]    
		Assume that    exactly   one  vertex of    $[b, j)$   is in $D$  (if  $b<j$).  Hence,     $D$ is an  
		$[1,2]$-dominating  set   of   $G[ 1, k] $      such that $l,k \in D$, $l\in [b, j)$     and   $a\notin D$ for all  $a\in [b,k) \setminus\{l\} $  and so 
		$|D|\leq \gamma^1_{[1,2]} (b,k:k,l)$. 
		
	\end{itemize}

	\noindent  It follows from (a)--(b) that    $ \gamma^0_{[1,2]} ( j,i:k) =|D|\leq 
	\min\{
	\gamma^1_{[1,2]} (b,k:k),
	\gamma^1_{[1,2]} ( b,k:  k,l ) 
	: 
	l\in [b,j)\} 
	$.

	Conversely,        assume that $S$ is an  $[1,2]$-dominating  set   of   $G[ 1, k] $ with minimum cardinality      such that $k\in S$   and   $a\notin S$ for all  $a\in [ b,k)  $. So,    $|S|=\gamma^1_{[1,2]} (  b,k:k)$.  
	Recall that  $b=\min\{\mathtt{low} ( k+1,i),j\}$. 
	Since   $k\in    [\mathtt{maxlow}( i), i ) $ and    $a\notin S$ for all  $a\in [ b,k)  $,   each vertex of $ (k,i]$ is  exactly  dominated by $k$. 
	We  obtain that $S$ is an  $[1,2]$-dominating  set   of   $G[ 1, i] $  such that   $k\in S$  and $a\notin S$   for all $a\in [j,i]\setminus\{k\}$,  that is,    $ |S|\leq   \gamma^0_{[1,2]} ( j, i:k)$.   
	
	Let $l\in   [b,j)$. 
	Assume that $S'$ is an  $[1,2]$-dominating  set   of   $G[ 1,k] $ with minimum cardinality      such that $l,k\in S'$   and   $a\notin S'$ for all  $a\in [ b,k) \setminus\{l\} $. So,    $|S'|=\gamma^1_{[1,2]} (  b,k:k,l )$.  Since    $l\in   [b,j)$,  $k\in    [\mathtt{maxlow}( i), i ) $   and    $a\notin S'$ for all  $a\in [ b,i) \setminus\{k,l\} $,   each vertex of $ (k,i]$ is dominated by at least one and at most two vertices of $S'$. 
	We  obtain that $S'$ is an  $[1,2]$-dominating  set   of   $G[ 1, i] $  such that   $k\in S'$  and $a\notin S'$   for all $a\in [j,i]\setminus\{k\}$,  that is,    $ |S'|\leq   \gamma^0_{[1,2]} ( j, i:k)$.   
	Hence,    $ \gamma^0_{[1,2]} ( j,i:k) =
	\min\{
	\gamma^1_{[1,2]} (b,k:k)=|S|,
	\gamma^1_{[1,2]} ( b,k:  k,l )=|S'| 
	: 
	l\in [b,j)\} 
	$
	This completes the proof of     the lemma.       
\end{proof}

\begin{lemma} \label{lem24}  
	Let $D$ be an     $[1,2]$-dominating  set   of   $G[ 1,i] $ with minimum cardinality      such that $i\in[1,n]$,  $i \in D$,
	$i'=\max D\cap [1,i)$,    $i''=\max D\cap [1,i')$  and   $\mathtt{low}(i)< i''$.  If both $i'$ and $i''$ exist,  then  $x\in D$   for all $x\in [\mathtt{low}(i), i'') $. 
\end{lemma}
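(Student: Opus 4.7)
My plan is to argue by contradiction using the minimality of $D$. Suppose, for contradiction, that there exists $x \in [\mathtt{low}(i), i'')$ with $x \notin D$. I will try to show that $D' := D \setminus \{i''\}$ is still a $[1,2]$-dominating set of $G[1,i]$, which yields the desired contradiction since $|D'| = |D|-1$.

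The adjacency facts come directly from Lemma~\ref{theo:RR}: the lower neighbors of any vertex $v$ in this numbering form the contiguous interval $[\mathtt{low}(v), v-1]$. In particular, because $\mathtt{low}(i) < i'' < i' < i$, both $i'$ and $i''$ (as well as the putative $x$) lie in $N(i)$, so $i$ dominates each of them from $D'$. Consequently, removing $i''$ from $D$ cannot leave $i''$ or $x$ undominated, and it cannot undermine the domination of any other vertex $v \in [\mathtt{low}(i), i-1] \setminus D$, since all such $v$ remain adjacent to $i \in D'$.

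To verify that $D'$ is a $[1,2]$-dominating set, two things remain: the lower bound, i.e.\ that every $v \notin D'$ has at least one neighbor in $D'$; and the upper bound at $i''$, which now plays the role of a dominated vertex and requires $|N(i'') \cap D| \leq 2$. The lower bound can fail only for vertices $v \in N(i'')$ lying below $\mathtt{low}(i)$ whose sole $D$-neighbor was $i''$, while the upper bound demands structural control over $i''$'s $D$-neighborhood.

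The main obstacle is to leverage the presence of $x \in [\mathtt{low}(i), i'') \setminus D$, in conjunction with the $[1,2]$-constraint $|N(x) \cap D| \leq 2$, to rule out both failure modes. The key point is that $i \in N(x) \cap D$ already consumes one of $x$'s two allowed dominators, which limits how many additional $D$-members can lie in $N(x)$; by the interval numbering and the definition of $i'$ and $i''$, this in turn constrains how many $D$-members can appear in $N(i'')$ and bounds the number of vertices below $\mathtt{low}(i)$ that could be uniquely dominated by $i''$. Assembling these consequences will show $D' = D \setminus \{i''\}$ is a valid $[1,2]$-dominating set of cardinality $|D|-1$, contradicting the minimality of $D$; hence no such $x$ exists and $x \in D$ for every $x \in [\mathtt{low}(i), i'')$.
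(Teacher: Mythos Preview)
Your plan to delete the single vertex $i''$ does not go through, and the hand-wave at the end (``this in turn constrains how many $D$-members can appear in $N(i'')$ and bounds the number of vertices below $\mathtt{low}(i)$ that could be uniquely dominated by $i''$'') is exactly where the argument breaks. The $[1,2]$-constraint at $x$ tells you only that $x$ has at most one $D$-neighbour besides $i$; it says nothing about $D\cap[\mathtt{low}(i''),i'')$ when $\mathtt{low}(i'')<\mathtt{low}(x)$ or $\mathtt{low}(i'')<\mathtt{low}(i)$. In particular $i''$ can be adjacent to many $D$-vertices below $\mathtt{low}(i)$ (so deleting $i''$ would leave $i''$ with more than two dominators), and $i''$ can be the \emph{unique} dominator of some $v<\mathtt{low}(i)$ (so deleting $i''$ would leave $v$ undominated). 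Neither failure is ruled out by the inequality $|N(x)\cap D|\le 2$, because $x\ge\mathtt{low}(i)$ while the problematic vertices live below $\mathtt{low}(i)$. A small concrete instance already shows this: take $\mathtt{low}(i)=3$, $i''=4$ with $\mathtt{low}(4)=1$, $x=3$ with $\mathtt{low}(3)=3$, and a vertex $2$ with $N(2)=\{4\}$; then $D=\{1,4,i',i\}$ is a $[1,2]$-dominating set with $x\notin D$, but $D\setminus\{4\}$ leaves vertex~$2$ undominated.

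The paper's proof avoids both obstructions by removing not just $i''$ but the whole block $S=D\cap(j,i)$, where $j=\max\bigl([\mathtt{low}(i),i'')\setminus D\bigr)$; note $|S|\ge 2$ since $i'',i'\in S$. The point is that at most one $s\in S$ can be adjacent to $j$ (because $j\notin D$ already sees $i$), so every other $s\in S$ satisfies $\mathtt{low}(s)>j\ge\mathtt{low}(i)$. Hence those $s$ reach no vertex $\le j$, their removal cannot undominate anything below $\mathtt{low}(i)$, and once removed they have at most the two neighbours $i,k$ left in the new set. Keeping the single possible exception $k$ (if it exists) therefore yields a strictly smaller $[1,2]$-dominating set containing $i$, which is the contradiction. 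Your argument needs this ``remove the whole block, keep at most one'' idea; deleting $i''$ alone cannot be made to work.
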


\begin{proof}
	Suppose  that $j=max  [\mathtt{low}(i),i'')\setminus D$.  See Figure \ref{fig15}.
	Since $j\notin D$  and $j$ is adjacent to $i$, the vertex $j$   is adjacent to at most one vertex of $S=D\cap (j,i)$. If $j$ is adjacent to no vertex of   $S $,    then $D'=D\setminus S$   is an  $[1,2]$-dominating  set   of   $G[ 1,i] $ such that $i\in D'$, a contradiction.
	Let     $j$ be  adjacent to exactly  one    vertex    $k\in S $.  We obtain that  $D''=D \setminus S\cup\{k\} $ is an  $[1,2]$-dominating  set   of   $G[ 1,i] $ such that $i\in D'$, a contradiction.
	This completes the proof of the lemma.     
\end{proof}

\begin{figure}[htb]
	\begin{center}
		\includegraphics[width= 0.4  \textwidth]{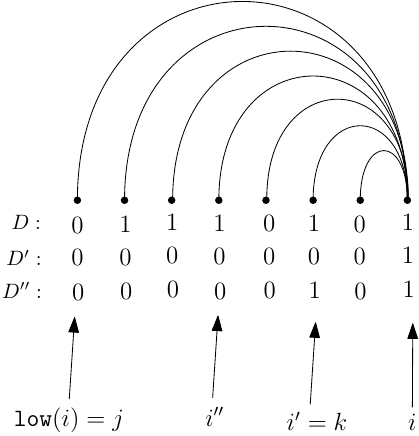}
		\caption{Illustrating   an    $[1,2]$-dominating  set $D$  of   $G[ 1, i] $    
			such that   $i \in D  $.    Note that a vertex with label 1 is in $D$  and  a vertex with label 0  is  not in $D$.} 
		\label{fig15}
	\end{center}
\end{figure}

\begin{lemma} \label{lem14}
	Let    $b=\mathtt{low} ( i)$. 
	\begin{itemize}
		\item [(i).]
		If  $b=i$, then  $\gamma^1_{[1,2]} (i)=\gamma_{[1,2]} (i-1)+1$.
		
		\item [(ii).]
		If  $b<i$,  
		then   $\gamma^1_{[1,2]} (i)=\min\{ \gamma^1_{[1,2]} (b ,i:i), \gamma^1_{[1,2]} (b ,i:i,j) ,\gamma^1_{[1,2]} (b ,i:i,j',k'),\gamma^{11}_{[1,2]} (b ,i:i,j',k')   : b\leq  j<i, b\leq k'< j'<i \} $.
	\end{itemize}
\end{lemma}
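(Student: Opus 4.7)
The plan is to handle (i) directly from the fact that $b=i$ means $i$ has no smaller-indexed neighbor, and to prove (ii) by partitioning an optimal $D$ (with $i\in D$ and $|D|=\gamma^1_{[1,2]}(i)$) according to $t:=|D\cap[b,i)|$, using Lemma~\ref{lem24} to constrain the shape of $D\cap[b,i)$ when $t\geq 2$.

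For (i), the hypothesis $b=\mathtt{low}(i)=i$ means $i$ has no neighbor in $[1,i-1]$. Any $D\ni i$ that $[1,2]$-dominates $G[1,i]$ therefore yields a $[1,2]$-dominating set $D\setminus\{i\}$ of $G[1,i-1]$, since removing $i$ changes no neighborhood count in $[1,i-1]$; conversely $D'\cup\{i\}$ $[1,2]$-dominates $G[1,i]$ whenever $D'$ $[1,2]$-dominates $G[1,i-1]$. Minimizing gives $\gamma^1_{[1,2]}(i)=\gamma_{[1,2]}(i-1)+1$.

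For (ii), one direction is immediate: each right-hand term is a restricted minimum over a sub-family of $[1,2]$-dominating sets of $G[1,i]$ containing $i$, so each such term is at least $\gamma^1_{[1,2]}(i)$, and hence so is their minimum. The nontrivial direction is to show that every optimal $D$ already witnesses one of the four sub-families. The cases $t=0$ and $t=1$ (with unique element $j\in[b,i)$) directly match $\gamma^1_{[1,2]}(b,i:i)$ and $\gamma^1_{[1,2]}(b,i:i,j)$, respectively. For $t\geq 2$, set $i'=\max(D\cap[1,i))$ and $i''=\max(D\cap[1,i'))$; both exist, and because the top two elements of $D\cap[1,i)$ lie in $[b,i)$ we have $i''\geq b$. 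If $i''=b$, the definition of $i''$ forces $D\cap(i'',i')=\emptyset$ and hence $D\cap[b,i)=\{b,i'\}$, matching $\gamma^1_{[1,2]}(b,i:i,i',b)$. If $i''>b$, Lemma~\ref{lem24} gives $[b,i'')\subseteq D$, which combined with $D\cap(i'',i')=\emptyset$ yields $D\cap[b,i)=[b,i'']\cup\{i'\}$, matching $\gamma^{11}_{[1,2]}(b,i:i,i',i'')$.

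The main subtlety I expect is the $t\geq 2$ case, where Lemma~\ref{lem24} is exactly the tool preventing $D\cap[b,i)$ from being anything other than a contiguous prefix $[b,i'']$ together with the isolated top element $i'$; without it one would have to consider many more sub-families and (ii) would fail as stated. Once this structural dichotomy is in place, the identification of each configuration with one of the four restricted quantities is routine bookkeeping.
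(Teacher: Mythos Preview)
Your proposal is correct and follows essentially the same approach as the paper: both argue (i) directly from $i$ being isolated from $[1,i-1]$, and for (ii) both take an optimal $D\ni i$, case-split on $|D\cap[b,i)|$, and invoke Lemma~\ref{lem24} to force the block structure when there are many elements. The only cosmetic difference is that the paper splits the $t\ge 2$ case into ``exactly two'' versus ``at least three'', whereas you split it into $i''=b$ versus $i''>b$; these are equivalent (your argument in fact makes the hypothesis $b<i''$ of Lemma~\ref{lem24} explicit, which the paper leaves implicit).
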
 
\begin{proof}
	The proof of the case (i) is clear. Assume that  $\mathtt{low} ( i)<i$.  
	Let $D$ be an     $[1,2]$-dominating  set   of   $G[ 1,i] $ with minimum cardinality      such that $i \in D$. So, $|D|=\gamma_{[1,2]}^1(i)$. We obtain that    all vertices of  $[b,i)$ are not in $D$,      exactly one vertex of $[b,i)$   is in $D$,   exactly two vertices  of $[b,i)$   are  in $D$  or     at least  three  vertices  of $[b,i)$   are  in $D$.  If  all vertices of  $[b,i)$ are not in $D$, then
	\begin{eqnarray}\label{equ11}
		\gamma_{[1,2]}^1(b,i:i)\leq |D|=\gamma_{[1,2]}^1( i).
	\end{eqnarray}
	
	\noindent  If  exactly one vertex   $j\in [b,i)$   is in $D$,  then
	\begin{eqnarray}\label{equ12}
		\gamma_{[1,2]}^1(b,i:i,j)\leq |D|=\gamma_{[1,2]}^1( i).
	\end{eqnarray}
	
	\noindent  If  exactly two vertices   $j,k\in [b,i)$   are in $D$ such that $k<j$,  then
	\begin{eqnarray}\label{equ13}
		\gamma_{[1,2]}^1(b,i:i,j,k)\leq |D|=\gamma_{[1,2]}^1( i).
	\end{eqnarray}

	\noindent  Assume that     at least  three  vertices   of   $   [b,i)$    are in $D$  and let $i'=\max D\cap [1,i)$  and    $i''=\max D\cap [1,i')$.   
	By Lemma \ref{lem24}, $x\in D$   for all $x\in [b, i'') $. 
	Hence, 
	\begin{eqnarray}\label{equ14}
		\gamma_{[1,2]}^{11}(b,i:i,i',i'')\leq |D|=\gamma_{[1,2]}^1( i).
	\end{eqnarray}

	Clearly,  $\gamma^1_{[1,2]} (i)\leq \min\{ \gamma^1_{[1,2]} (b ,i:i), \gamma^1_{[1,2]} (b ,i:i,j) ,\gamma^1_{[1,2]} (b ,i:i,j',k'),\gamma^{11}_{[1,2]} (b ,i:i,j',k')   : b\leq  j<i, b\leq k'< j'<i \} $.
	This, together with inequalities (\ref{equ11})--(\ref{equ14}),  completes the proof of the lemma. 
\end{proof}

\begin{lemma} \label{lem4}
	Let $i\in[1,n]$,   $1<b= \mathtt{low} ( i) $  and $a=\max\{j,b\}$. 
	\begin{itemize}
		\item[(i)]
		If $ j\leq  \mathtt{low} ( b-1  )$, then  
		$\gamma^1_{[1,2]} ( j,i:i)$   is not defined.
		
		\item[(ii)]
		If $   \mathtt{low} ( b-1  ) <j<  \mathtt{low} ( b ,i-1 )$, then  
		$\gamma^1_{[1,2]} ( j,i:i)=\gamma^0_{[1,2]} ( j,b-1)+1$.

		\item[(iii)]
		If $  \mathtt{low} ( b ,i-1 )\leq  j  $, then
		$\gamma^1_{[1,2]} ( j,i:i)=
		\min\{
		\gamma^0_{[1,2]} ( \mathtt{low} (c ) , j-1 ),
		\gamma^0_{[1,2]} ( \mathtt{low} (c ),j-1:  k ) ,
		\gamma^1_{[1,2]} ( \mathtt{low} ( c ),j-1: j-1 ) 
		: 
		k\in [\mathtt{low} (c ),j-1)\}+1 
		$, where  $c\in  [a,i-1) $  such that   $  \mathtt{low} ( c )\leq  \mathtt{low} ( x  )$ for   all $x\in [a,i-1) $.

	\end{itemize} 
\end{lemma}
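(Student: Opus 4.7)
My plan is to prove Lemma~\ref{lem4} by treating the three cases (i)--(iii) separately, with the unifying strategy of decomposing a minimum $[1,2]$-dominating set $D$ of $G[1,i]$ satisfying $i\in D$ and $D\cap[j,i)=\emptyset$ as $D=D'\cup\{i\}$ with $D'\subseteq[1,j-1]$, and showing that $D'$ meets the constraints of the appropriate subproblem. Throughout, I will use the structural fact (from Lemma~\ref{lem1} together with the numbering of Lemma~\ref{theo:RR}) that $N[v]\cap[1,v]$ is the contiguous interval $[\mathtt{low}(v),v]$, so in particular $N(v)\cap[1,j-1]=[\mathtt{low}(v),j-1]$ whenever $v\geq j$ and $\mathtt{low}(v)\leq j-1$.

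For case (i), when $j\leq\mathtt{low}(b-1)$, the vertex $b-1$ lies in $[j,i-1]$ (since $\mathtt{low}(b-1)\leq b-1<i$), so $b-1\notin D$. Because $b-1<b=\mathtt{low}(i)$, the vertex $i$ is not a neighbor of $b-1$, and the remaining admissible dominators of $b-1$ lie in $D\cap[1,j-1]\cap N(b-1)\subseteq[\mathtt{low}(b-1),j-1]$. The latter interval is empty by hypothesis, so $b-1$ cannot be dominated, and $\gamma^1_{[1,2]}(j,i:i)$ is undefined.

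For case (ii), the two inequalities yield (a) that $b-1$ has at least one neighbor in $[1,j-1]$ and (b) that every $v\in[b,i-1]$ has $\mathtt{low}(v)>j$, so $v$'s only admissible dominator besides itself is $i$. For the forward direction I would note that $D'=D\setminus\{i\}\subseteq[1,b-1]$ and argue that $D'$ is a $[1,2]$-dominating set of $G[1,b-1]$ with $D'\cap[j,b-1]=\emptyset$: each $v\in[1,b-1]\setminus D'$ is not adjacent to $i$, so its domination must come entirely from $D'$ with between one and two neighbors. For the backward direction I extend any such $D'$ by $\{i\}$, using property (b) to verify that every $v\in[b,i-1]$ is dominated by exactly one vertex, $i$. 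This yields the equality $\gamma^1_{[1,2]}(j,i:i)=\gamma^0_{[1,2]}(j,b-1)+1$.

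For case (iii), with $c\in[a,i-1)$ chosen to minimize $\mathtt{low}$, the key observation is that every $v\in[a,i-1]$ satisfies $v\geq b$ and hence $i\in N(v)$; the constraint $|N(v)\cap D|\leq 2$ therefore forces $|N(v)\cap D'|\leq 1$, which by $\mathtt{low}(v)\geq\mathtt{low}(c)$ reduces to $|D'\cap[\mathtt{low}(c),j-1]|\leq 1$. I would split according to this count into three subcases---zero elements, one element at some $k\in[\mathtt{low}(c),j-1)$, or one element at $j-1$---each matching $\gamma^0_{[1,2]}(\mathtt{low}(c),j-1)$, $\gamma^0_{[1,2]}(\mathtt{low}(c),j-1:k)$, or $\gamma^1_{[1,2]}(\mathtt{low}(c),j-1:j-1)$, respectively. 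For each subcase I would verify both the forward direction (the restriction $D'=D\cap[1,j-1]$ meets the subproblem's constraints) and the backward direction (any optimum of the subproblem, extended by $\{i\}$, yields a valid $D$ since $i$ covers the vertices in $[a,i-1]$ within the allowed multiplicity), then take the minimum over the three alternatives and add $1$ for $i$. The hardest step will be establishing that $D'$ is itself a $[1,2]$-dominating set of $G[1,j-1]$: vertices $v\in[b,j-1]$ could in principle be dominated only by $i$ in the full $D$, and I expect to handle this by an exchange argument that uses the minimality of $D$ together with the structural guarantee obtained from the choice of $c$.
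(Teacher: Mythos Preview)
Your plan matches the paper's proof closely: case~(i) via the undominated vertex $b-1$, case~(ii) via restricting $D\setminus\{i\}$ to $G[1,b-1]$, and case~(iii) via the pivot vertex $c$ giving $|D'\cap[\mathtt{low}(c),j-1]|\le 1$ and the resulting three-way split into $\gamma^0_{[1,2]}(\mathtt{low}(c),j-1)$, $\gamma^0_{[1,2]}(\mathtt{low}(c),j-1:k)$, and $\gamma^1_{[1,2]}(\mathtt{low}(c),j-1:j-1)$. Two small remarks: first, the bound $|D'\cap[\mathtt{low}(c),j-1]|\le 1$ comes simply from applying your per-vertex observation at $v=c$ (this is exactly what the paper does), not from any ``reduction'' via $\mathtt{low}(v)\ge\mathtt{low}(c)$ for general $v$, which would go the wrong way; second, the step you single out as hardest---showing that $D'$ still dominates the vertices of $[b,j-1]$ when $b<j$---is in the paper simply asserted without argument, so your promised exchange argument would in fact supply more detail than the original proof.
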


\begin{proof}
	Let $D$  be an      $[1,2]$-dominating  set   of   $G[ 1,i] $ with minimum cardinality      such that $i \in D$   and   $x\notin D$ for all  $x\in [j,i)   $.  So,    $|D|=\gamma^1_{[1,2]} (  j,i:i)$.  If   $ j\leq  \mathtt{low} ( b-1  )$, then there is no vertex  in $  D$ to dominate $b-1$ and so $D$ is not a $[1,2]$-dominating  set   of   $G[ 1,i] $, that is,   $ \gamma^1_{[1,2]} (  j,i:i)$ is not  defined. See Figure \ref{fig14}(a).  This proves the case (i).  In the rest of the proof, we assume that  $    \mathtt{low} ( b-1  )<j$. Recall that $a=\max\{j,b\}$. 
	Let $c\in  [a,i-1) $ be a vertex such that   $  \mathtt{low} ( c )\leq  \mathtt{low} ( x  )$ for   all $x\in [a,i-1) $, that  is,   $  \mathtt{low} ( c )= \mathtt{low} ( a ,i-1  )$.  
	We  distinguish two cases depending on   $j<    \mathtt{low} (c  )$ or   $    \mathtt{low} (c  )\leq j$. In the following we consider these cases.

	\begin{figure}[htb]
		\begin{center}
			\includegraphics[width=  \textwidth]{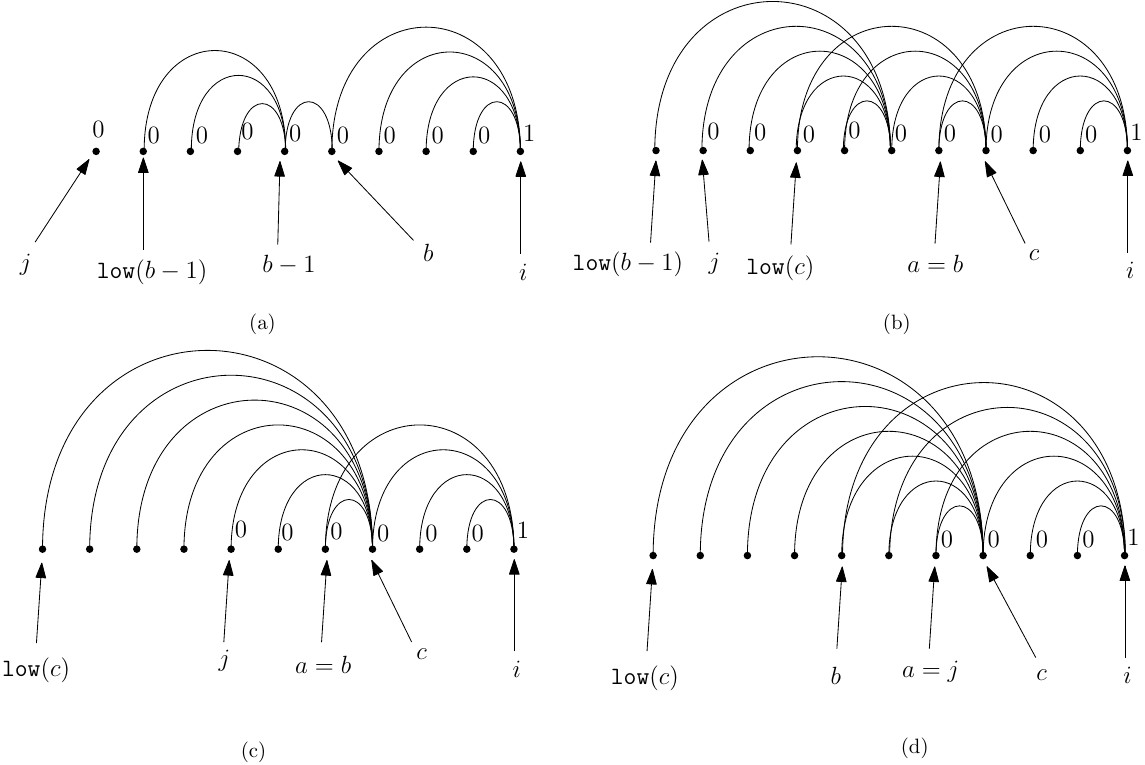}
			\caption{ Illustrating   an    $[1,2]$-dominating  set $D$  of   $G[ 1, i] $    
				such that   $i \in D  $  and $x\notin D  $   for all $x\in[j,i)$;  (a) 
				$ j\leq  \mathtt{low} ( b-1  )$,  (b)  $   \mathtt{low} ( b-1  ) <j<  \mathtt{low} ( c )$, (c)   $  \mathtt{low} (c )\leq j\leq b=a\leq c<i $, and (d)     $  \mathtt{low} (c )\leq b< j=a\leq c<i $,  
				where      $b= \mathtt{low} ( i) $,  $a=\max\{j,b\}$ and  $\mathtt{low}(c) = \mathtt{low} (b,i-1 )$. 
				Note that a vertex with label 1 is in $D$  and  a vertex with label 0  is  not in $D$.} 
			\label{fig14}
		\end{center}
	\end{figure}

	\begin{itemize}
		\item [$\bullet$]   
		Assume that   $j<    \mathtt{low} (c )$. 
		If $b\leq j$, that is, $a=j$, then $ \mathtt{low} (c )=   \mathtt{low} ( j-1,i-1  )\leq   \mathtt{low} ( j )\leq j$,  a contradiction.   Hence,        $j<b$.  See Figure \ref{fig14}(b).  Recall that  $x\notin D$ for all $x\in [j,i)$.  Let $y\in [j,b-1]$.  Since $y\notin D$ and $y$ is not adjacent to $i$,   there  are at least one and at most two vertices  in $D$   that are adjacent to $y$,  that is,  $1\leq |[1,j)\cap D|\leq 2$.  Let $D'=D\setminus\{i\}$.  We get that $D'$ is an 
		$[1,2]$-dominating  set   of   $G[ 1,b-1] $    such that   $x\notin D'$ for all  $x\in [j,b-1]   $.  Thus,  $ \gamma^0_{[1,2]} ( j,b-1)\leq |D'|=\gamma^1_{[1,2]} ( j,i:i)-1$.

		Conversely, assume that $S$ is a $[1,2]$-dominating  set   of   $G[ 1,b-1] $ with minimum cardinality      such that    $x\notin S$ for all  $x\in [j,b-1]   $.  So,    $|S|=\gamma^0_{[1,2]} (  j,b-1)$. 
		We  see that $S'=S\cup\{i\}$ is a $[1,2]$-dominating  set   of   $G[ 1,i] $      such that $i\in S'$ and     $x\notin D$ for all  $x\in [j,i-1]   $.   So, 
		$\gamma^1_{[1,2]} ( j,i:i)\leq |S'|=\gamma^0_{[1,2]} (  j,b-1)+1$. This completes the proof of the case (ii). 
		
		\item [$\bullet$] 
		Assume that   $    \mathtt{low} (c )\leq j$. 
		If $j\leq b$,   then $  \mathtt{low} (c )\leq j\leq b=a\leq c<i $,   
		see Figure \ref{fig14}(c),  otherwise,  $  \mathtt{low} (c )\leq b< j=a\leq c<i $, see Figure \ref{fig14}(d). 
		Since $a=\max\{j,b\}$, we obtain    that  $c$ is adjacent to $i$, where $i\in D$. 
		Since   $x\notin D$ for all  $x\in [j,i)   $, $c\notin D$ and so $c$ is adjacent to at most one vertex of $D\setminus\{i\}$, that is, 
		either  (a) $x\notin D$ for all $x\in [ \mathtt{low} (c ),j-1] $ or  (b)    exactly one vertex of  $  [ \mathtt{low} (c ),j-1] $ is in $D$. In the following we consider these cases. 
		
		\begin{itemize}
			
			\item [(a)]
			
			Assume that  $x\notin D$ for all $x\in [ \mathtt{low} (c ),j-1] $. We obtain that  $D'=D\setminus\{i\}$  is a $[1,2]$-dominating  set   of   $G[ 1,j-1] $  such that   $x\notin D'$   for all $x\in   [ \mathtt{low} (c ),j-1] $.  Hence, 
			
			\begin{eqnarray}\label{equ1}
				\gamma_{1,2}^0(\mathtt{low} (c ),j-1)\leq |D'|=\gamma_{1,2}^1(j,i:i)-1.
			\end{eqnarray} 
			
			\item [(b)]
			Assume that  exactly one vertex   $ k\in  [ \mathtt{low} (c ),j-1] $ is in $D$. 
			If $k<j-1$, then   $D'=D\setminus\{i\}$  is a $[1,2]$-dominating  set   of   $G[ 1,j-1] $  such that  $k\in D'$  and  $x\notin D'$   for all $x\in   [ \mathtt{low} (c ),j-1] $.  Hence, 
			\begin{eqnarray}  
				\gamma_{1,2}^0(\mathtt{low} (c ),j-1:k)\leq |D'|=\gamma_{1,2}^1(j,i:i)-1.
			\end{eqnarray}
			If $k=j-1$, then   $D'=D\setminus\{i\}$  is a $[1,2]$-dominating  set   of   $G[ 1,j-1] $  such that  $j-1\in D'$  and  $x\notin D'$   for all $x\in   [ \mathtt{low} (c ),j-1] $.  Hence, 
			\begin{eqnarray}   
				\gamma_{1,2}^1(\mathtt{low} (c ),j-1:j-1)\leq |D'|=\gamma_{1,2}^1(j,i:i)-1.
			\end{eqnarray}
			
		\end{itemize}
		
		Conversely, assume that $S_1$ is an  $[1,2]$-dominating  set   of   $G[ 1,j-1] $ with minimum cardinality  such that  $x\notin S_1$ for all $x\in [ \mathtt{low} (c ),j-1] $. So, $|S_1|=\gamma_{[1,2]}^0(\mathtt{low} (c ),j-1)$.  Because 
		$  \mathtt{low} ( c )= \mathtt{low} ( a ,i-1  )$,  each vertex of  $  [b,i]$ is adjacent to no vertex of   $S$. Hence, $S'_1=S_1\cup\{i\} $ is an
		$[1,2]$-dominating  set   of   $G[ 1,i] $   such that  $i\in S'_1$ and  $x\notin S_1$ for all $x\in [ j-1,i) $. So, 
		\begin{eqnarray}   
			\gamma_{1,2}^1(j,i:i)\leq |S'_1|=\gamma_{[1,2]}^0(\mathtt{low} (c ),j-1)+1.
		\end{eqnarray}
		
		Assume that $S_2$ is an  $[1,2]$-dominating  set   of   $G[ 1,j-1] $ with minimum cardinality  such that $k\in S_2$ for some $k\in     [ \mathtt{low} (c ),j-1)$ and  $x\notin S_2$ for all $x\in [ \mathtt{low} (c ),j-1] \setminus\{k\}$. So, $|S_2|=\gamma_{[1,2]}^0(\mathtt{low} (c ),j-1:k)$.  Because 
		$  \mathtt{low} ( c )= \mathtt{low} ( a ,i-1  )$,  each vertex of  $  [b,i]$ is adjacent to at most one vertex    $k$. Hence, $S'_2=S_2\cup\{i\} $ is an
		$[1,2]$-dominating  set   of   $G[ 1,i] $   such that  $i\in S'_2$ and  $x\notin S_2$ for all $x\in [ j-1,i) $. So, 
		\begin{eqnarray}   
			\gamma_{1,2}^1(j,i:i)\leq |S'_2|=\gamma_{[1,2]}^0(\mathtt{low} (c ),j-1:k)+1.
		\end{eqnarray}
		
		Similarly, we obtain that  
		\begin{eqnarray} \label{equ6}  
			\gamma_{1,2}^1(j,i:i)\leq |S'_2|=\gamma_{[1,2]}^1(\mathtt{low} (c ),j-1:j-1)+1.
		\end{eqnarray}
		
		The proof of the  case (iii)   follows from inequalities (1)--(6). 
	\end{itemize}   
	This completes the proof of the lemma.    
\end{proof}

\begin{lemma} \label{lem5}
	Let  $i\in[1,n]$, $j\in[1,i)$,    $b=\mathtt{low}(i) $,   $c=\mathtt{low}(j) $,     $d =\mathtt{low}(j+1,i-1) $,     $d'=\mathtt{low}(b,j-1) $,  $d''=\mathtt{low}(b,i-1) $, 
	$z=\max\{x\in (j ,i ):  d=\mathtt{low}(x) \}$, 
	and $k'=\min\{k,b,d,d'\}$.

	\begin{itemize}
		\item[(i)]
		If   either $j<b$  and    $j< \mathtt{low} (  x )$  for some $x\in (j, b)$  or       $k<\min\{ b,c\} $  and    $k\leq  \mathtt{low} (  x )$  for some $x\in [k, \min\{ b,c\} )$,  then  $\gamma^1_{[1,2]} ( k,i:i,j)$ is not defined,
		\item[(ii)]   assume that    $j<b$  and  $z<b$, 
		
		\begin{itemize}
			\item[(a)]  
			if       $k=j  \leq d$,  then   $\gamma^1_{[1,2]} ( k,i:i,j)=\gamma^1_{[1,2]} (   j)+1$,
			\item[(b)]  if   $k<j $  and  $\min\{k,d''\} \leq d$, 
			then   $\gamma^1_{[1,2]} ( k,i:i,j)=\gamma^1_{[1,2]} ( \min\{k,d''\},j: j)+1$, 
			\item[(c)]  if    $ d<\min\{k,d''\}$, then   
			$\gamma^1_{[1,2]} ( k,i:i,j)=\min\{\gamma^1_{[1,2]} ( d,j: j),\gamma^1_{[1,2]} ( d,j: j,x):x\in [d,\min\{k,d''\}) \}+1$,
		\end{itemize} 
		\item[(iii)]  assume that   $j<b$ and  $z\geq b$,
		\begin{itemize}
			\item[(a)]
			
			if      $k=j  \leq d$,  then   $\gamma^1_{[1,2]} ( k,i:i,j)=\gamma^1_{[1,2]} (   j)+1$,
			\item[(b)]  if      $k<j$ and $k\leq d$,  then   $\gamma^1_{[1,2]} ( k,i:i,j)=\gamma^1_{[1,2]} ( k,j: j)+1$,

			\item[(c)]  if    $ d<k$, then   
			$\gamma^1_{[1,2]} ( k,i:i,j)= \gamma^1_{[1,2]} ( d,j: j) +1$,  
			
		\end{itemize}

		\item[(iv)]  
		assume   $j \geq b$, 
		\begin{itemize}
			\item[(a)]  if      $  b=k =j  \leq d $,  then  $\gamma^1_{[1,2]} ( k,i:i,j)= \gamma^1_{[1,2]} (   j) +1$, 
			
			\item[(b)]
			if $b\leq c$, then  $   \gamma^1_{[1,2]} (k, i:i )<\gamma^1_{[1,2]} ( k,i:i,j)$, 
			
			\item[(c)]  if $c<b$, then   
			$\gamma^1_{[1,2]} ( k,i:i,j)= \gamma^1_{[1,2]} ( k',j: j) +1$.
			
			%  \item[(d)]  if  $c<b$  and       $c<k'$,  then   $\gamma^1_{[1,2]} ( k,i:i,j)= \min\{\gamma^1_{[1,2]} ( c,j: j),\gamma^1_{[1,2]} ( c,j: j,x),\gamma^1_{[1,2]} ( c,j: j,x,y) ,    \gamma^{11}_{[1,2]} ( c,j: j,x,y)    :x,y\in [c,k') ,y<x\}+1$.   
		\end{itemize}
		
	\end{itemize}

\end{lemma}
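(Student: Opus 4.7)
The plan is to mirror the two-direction template of Lemmas~\ref{lem3} and~\ref{lem4}. Fix a minimum $[1,2]$-dominating set $D$ of $G[1,i]$ witnessing $\gamma^1_{[1,2]}(k,i:i,j)$, so $i,j\in D$ while no vertex of $[k,i)\setminus\{j\}$ lies in $D$. Since $b=\mathtt{low}(i)$ and $[\mathtt{maxlow}(i),i]$ is a clique by Lemma~\ref{lem1}, the vertex $i$ dominates exactly $[b,i-1]$. The key move in each case is to show that $D\setminus\{i\}$ restricts to a $[1,2]$-dominating set of $G[1,j]$ satisfying one of the previously-defined constraint patterns, then to close the recursion by adding $i$ back (the ``$+1$''). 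The case split on $j$ versus $b$, with a further split in (ii)--(iii) on whether $z<b$ or $z\geq b$, is driven entirely by which vertices on the ``outside'' $(j,i-1]$ are still allowed a third dominator from below $j$.

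For the infeasibility assertions in case (i), I exhibit a vertex with no available dominator. In the first clause, $x\in(j,b)$ with $j<\mathtt{low}(x)$ lies outside $D$, is not adjacent to $i$ (since $x<b$), and is not adjacent to $j$ or to anything in $[1,j)$ (since $\mathtt{low}(x)>j$). The second clause is analogous: a candidate $x\in[k,\min\{b,c\})$ with $\mathtt{low}(x)\geq k$ is cut off from $i$ by $x<b$, from $j$ by $x<c$, and from $[1,k)$ by $\mathtt{low}(x)\geq k$.

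For cases (ii)--(iii), where $j<b$, the restriction $D\setminus\{i\}$ is already a $[1,2]$-dominating set of $G[1,j]$ with $j\in D\setminus\{i\}$. The only extra condition that has to propagate into the subproblem is the $\leq 2$ upper bound at the outside vertex $c'\in(j,i-1]$ realising $\mathtt{low}(c')=d$: $c'$ is already dominated by $i$ and may therefore see at most one more element of $D$, forcing at most one element of $D\cap[d,j)$ (case (iii)) or of $D\cap[\min\{k,d''\},j)$ (case (ii), where the distinction between $d''$ and $k$ comes from $z<b$ placing the restrictive outside vertex strictly below the $i$-clique). Subcases (a)--(c) merely record whether the forbidden initial segment is empty, strict, or exactly $[k,j)$, and the converse ``extend by $\{i\}$'' direction is verified exactly as in Lemma~\ref{lem4}.

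The main obstacle is case (iv), where $j\geq b$ so that $j$ lies in the clique $[b,i]$. Subcase (a) is immediate. Subcase (b) is a \emph{suboptimality} statement rather than an equality: when $b\leq c$, every neighbour of $j$ is at least $c\geq b$ and hence lies in the clique $[b,i]$ and is therefore also a neighbour of $i$; so $D\setminus\{j\}$ remains a $[1,2]$-dominating set of $G[1,i]$ containing $i$ and avoiding $[k,i)$, giving $\gamma^1_{[1,2]}(k,i:i)\leq|D|-1<\gamma^1_{[1,2]}(k,i:i,j)$. Subcase (c) is where $k'=\min\{k,b,d,d'\}$ finally consolidates \emph{every} constraint that must cross into the $G[1,j]$ subproblem: $k$ from the original hypothesis, $b$ so that $b-1$ remains dominable from below, $d$ for the outside low-reaching vertex, and $d'$ for the analogous inside vertex in $[b,j-1]$. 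The subtle point to verify is that no vertex of $[b,i-1]$ adjacent to both $i$ and $j$ accidentally picks up a third dominator from the $G[1,j]$-solution; this holds because $D\cap[k',j)=\emptyset$ and the very definition of $k'$ rules out any qualifying dominator in $[1,k')$.
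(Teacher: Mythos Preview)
Your overall strategy matches the paper's: restrict $D\setminus\{i\}$ to $G[1,j]$, check it meets one of the earlier constraint patterns, and invert by adjoining $i$. Cases (i), (iii), (iv.b), and (iv.c) are handled essentially as in the paper; in particular, for (iv.c) the paper also derives $D\cap[k',j)=\varnothing$ by casing on which of $k,d,d'$ attains the minimum, exactly as you indicate.

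There is, however, a genuine gap in your treatment of case (ii). You assert that the vertex $c'$ realising $\mathtt{low}(c')=d$ ``is already dominated by $i$'', but in (ii) one has $z<b=\mathtt{low}(i)$, so $z$ is \emph{not} adjacent to $i$. The failure of (i) forces $d\le j$, so $z$ is adjacent only to $j$ among $\{i,j\}$, and the constraint it yields is merely $|D\cap[d,j)|\le 1$, not a bound on $[\min\{k,d''\},j)$. The appearance of $d''$ in (ii.b)--(ii.c) comes from a \emph{second} witness you never mention: a vertex $y\in[b,i-1]$ with $\mathtt{low}(y)=d''$. When $d''\le k\ (\le j)$, this $y$ is adjacent to both $i$ (since $y\ge b$) and $j$, so $y\notin D$ already has two dominators and hence $D\cap[d'',j)=\varnothing$. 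Combined with the hypothesis $D\cap[k,j)=\varnothing$, one gets $D\cap[\min\{k,d''\},j)=\varnothing$---zero elements, not ``at most one'' as you write---while the single optional element permitted in (ii.c) lives in $[d,\min\{k,d''\})$ and is what $z$ actually controls. Your sketch for (ii) thus misidentifies both the witness vertex and the shape of the constraint; without $y$ you cannot reach the recursion in (ii.b).
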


\begin{proof}
	Let $D$ be an    $[1,2]$-dominating  set   of   $G[ 1,i] $ with minimum cardinality  such that $i,j\in D$   and  $x\notin D$ for all $x\in [ k,i) \setminus\{j\}$. So, $|D|=\gamma_{[1,2]}^1(k,i:i,j)$. 
	If      either $j<b$  and    $j< \mathtt{low} (  x )$  for some $x\in (j, b)$, see Figure~\ref{fig16}(a),   or       $k<\min\{ b,c\} $  and    $k\leq  \mathtt{low} (  x )$  for some $x\in [k, \min\{ b,c\} )$,  then   there is no vertex of $D$ to dominate   $x$, a contradiction. 
	This proves  the         case (i).   
	In the rest of the proof, we assume that    the case  (i)    does  not hold. 
	
	Recall that  $d =\mathtt{low}(j+1,i-1) $   and $z=\max\{x\in (j,i):  d=\mathtt{low}(x) \}$.  If $j\geq b$, then  $z\geq b$.   
	Hence, we have   (I)   $j<b$  and  $z<b$,  (II)  $j<b$  and  $z\geq b$,  or (III)   $j\geq b$.   In the following, we consider these cases.  \\

	\begin{figure}[htb]
		\begin{center}
			\includegraphics[width=  \textwidth]{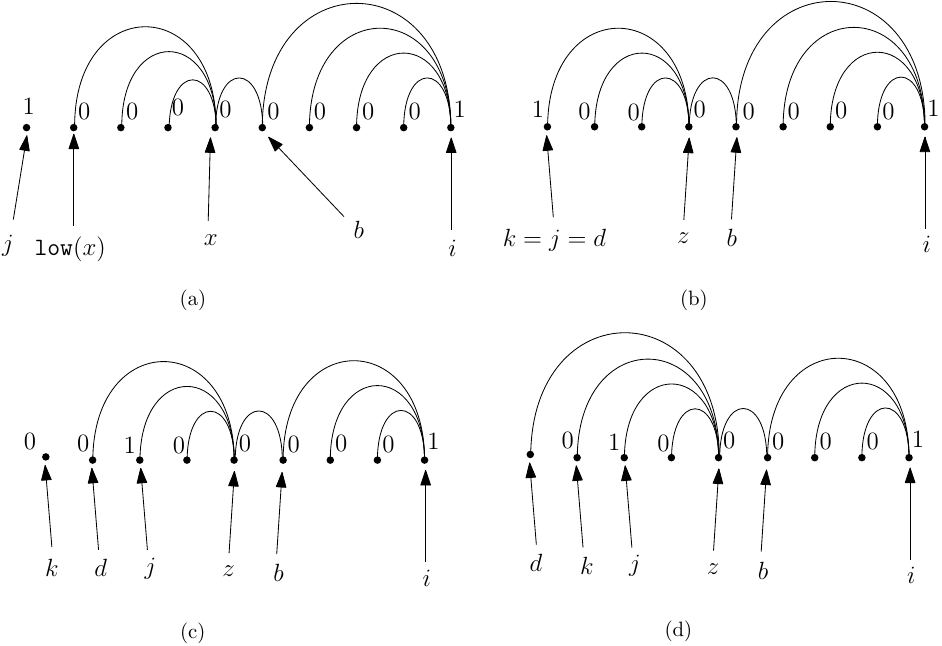}
			\caption{ Illustrating   an    $[1,2]$-dominating  set $D$  of   $G[ 1, i] $    
				such that   $i,j \in D  $  and $x\notin D  $   for all $x\in[k,i)\setminus\{j\}$;  (a)  
				$j<b$  and    $j< \mathtt{low} (  x )$  for some $x\in (j, b)$,       
				(b)    $k=j  \leq d$,   (c)     $k<j$ and $k\leq d$,  and (d)     
				$ d<k$.  
				Note that a vertex with label 1 is in $D$  and  a vertex with label 0  is  not in $D$.} 
			\label{fig16}
		\end{center}
	\end{figure}

	\noindent{\bf Case I.}  
	Assume that     $j<b$   and  $z<b$.  
	If   $d''\leq k$, then 
	$y$   is adjacent to $j$. Hence,  $ y\notin D$  is  adjacent to     vertices $i,j $  and so all vertices in $[d'',j)$ are not  in $D$.
	Since $k\leq j$, we distinguish the following cases:
	\begin{itemize}
		\item [$\bullet$]
		Assume  $k=j  \leq d$.  See Figure~\ref{fig16}(b).  We    get that $D'=D\setminus\{i\}$ is a $[1,2]$-dominating  set   of   $G[ 1,j] $  such that $ j\in D'$.   Hence,  $\gamma^1_{[1,2]} (   j)\leq |D'|=\gamma^1_{[1,2]} ( k,i:i,j)-1$. 
		Conversely, assume that $S$ is    a $[1,2]$-dominating  set   of   $G[ 1,j] $  with minimum cardinality  such that $ j\in S$.   So, $|S|=\gamma^1_{[1,2]} (   j)$.    Since  $j \leq d$ and  the case (i)  does not hold,   each vertex of  $    (j,i) $ is adjacent to at most  two vertices $i$ and $j$. Hence, $S'=S\cup\{i\}$ is an    $[1,2]$-dominating  set   of   $G[ 1,i] $   such that $i, j\in S'$ and $x\notin S'$ for all $x\in    [k,i)\setminus\{j\}$    and so  $\gamma^1_{[1,2]} ( k,i:i,j)\leq |S'|=\gamma^1_{[1,2]} (   j)+1$.   This completes the proof of the case (ii.a).
		\item [$\bullet$]
		Assume   $k<j$.  See Figure~\ref{fig16}(c). 
		Clearly,  $y$   is adjacent to $i$.
		We    get that $D'=D\setminus\{i\}$ is a $[1,2]$-dominating  set   of   $G[ 1,j] $  such that $ j\in D'$  and $x\notin D'$   for all $x\in [\min\{k,d''\},j)$.   Hence,  $\gamma^1_{[1,2]} ( \min\{k,d''\},  j:j)\leq |D'|=\gamma^1_{[1,2]} ( k,i:i,j)-1$. 
		Conversely, assume that $S$ is    a $[1,2]$-dominating  set   of   $G[ 1,j] $  with minimum cardinality  such that $ j\in S$  and $x\notin S$   for all $x\in [\min\{k,d''\},j)$.  So, $|S|=\gamma^1_{[1,2]} ( \min\{k,d''\},  j:j)$. Since  $x\notin S$   for all $x\in [\min\{k,d''\},j)$,   $k \leq d$ and  the case (i)  does not hold,   each vertex of  $  (j,i)$ is adjacent to at most  two vertices $i$ and $j$. Hence, $S'=S\cup\{i\}$ is an    $[1,2]$-dominating  set   of   $G[ 1,i] $   such that $i, j\in S'$ and $x\notin S'$ for all $x\in  [\min\{k,d''\},i)\setminus\{j\}$    and so  $\gamma^1_{[1,2]} ( k,i:i,j)\leq |S'|=\gamma^1_{[1,2]} ( \min\{k,d''\},  j:j)+1$.   This completes the proof of the case (ii.b).
		\item [$\bullet$]
		Assume   $ d<\min\{k,d''\}$.    See Figure~\ref{fig16}(d). Since      $z \notin D$ and $z$ is adjacent to $j$,  the vertex $z$ is  adjacent to at most one vertex  of $D$. Thus, at least one vertex of $[d,\min\{k,d''\}) $ is in $D$,  that is,  either  no vertex of  $[d,\min\{k,d''\})$ is  in $D$   or   exactly one  vertex of  $[d,\min\{k,d''\})$ is  in $D$. In the following, we consider  these cases.

		If      no vertex of  $[d,\min\{k,d''\})$ is  in $D$,  then  we    get that $D'_1=D\setminus\{i\}$ is a $[1,2]$-dominating  set   of   $G[ 1,j] $  such that $ j\in D'_1$  and $x\notin D'_1$   for all $x\in [d,  j )$.  Hence, 
		\begin{eqnarray}\label{eq30}
			\gamma^1_{[1,2]} ( d,  j:j)\leq |D'_1|=\gamma^1_{[1,2]} ( k,i:i,j)-1 . 
		\end{eqnarray}
		
		If      exactly  one  vertex   $w\in [d,\min\{k,d''\}   )$ is  in $D$,  then 
		we    get that $D'_2=D\setminus\{i\}$ is a $[1,2]$-dominating  set   of   $G[ 1,j] $  such that $ j,w\in  D'_2$  and $x\notin D'_2$   for all $x\in [d,  j )\setminus\{w\}$.  Assume that  $x'\in [d,  \min\{k,d''\} ) $.  Hence,  
		\begin{eqnarray}\label{eq31}
			\gamma^1_{[1,2]} ( d,  j:j,x')\leq |D'_2|=\gamma^1_{[1,2]} ( k,i:i,j)-1 .
		\end{eqnarray}
		
		Conversely, assume that $S_1$ is    a $[1,2]$-dominating  set   of   $G[ 1,j] $  with minimum cardinality  such that $ j\in S_1$  and $x\notin S_1$   for all $x\in [d,j)$.  So, $|S_1|=\gamma^1_{[1,2]} ( d,  j:j)$. Since  $x\notin S_1$   for all $x\in [d,j)$,   $d<k$ and  the case (i)  does not hold,   each vertex of  $  (j,i)$ is adjacent to at most  two vertices $i$ and $j$. Hence, $S'_1=S_1\cup\{i\}$ is an    $[1,2]$-dominating  set   of   $G[ 1,i] $   such that $i, j\in S'_1$ and $x\notin S'_1$ for all $x\in  [k,i)\setminus\{j\}$    and so 
		\begin{eqnarray}
			\gamma^1_{[1,2]} ( k,i:i,j)\leq |S'_1|=\gamma^1_{[1,2]} ( d,  j:j)+1.   
		\end{eqnarray}

		Let $x\in [   d, \min\{k,d''\} ) $. Assume that $S_2$ is    a $[1,2]$-dominating  set   of   $G[ 1,j] $  with minimum cardinality  such that $ j,x\in S_2$  and $w\notin S_2$   for all $w\in [d,j)\setminus\{x\}$.   So, $|S_2|=\gamma^1_{[1,2]} ( d,  j:j,x)$. We get that     $S'_2=S_2\cup\{i\}$ is an    $[1,2]$-dominating  set   of   $G[ 1,i] $   such that $i, j\in S'_2$ and $w\notin S'_2$ for all $w\in  [k,i)\setminus\{j\}$    and so 
		\begin{eqnarray}\label{eq35}
			\gamma^1_{[1,2]} ( k,i:i,j)\leq |S'_2|=\gamma^1_{[1,2]} ( d,  j:j,x)+1.   
		\end{eqnarray}
		
		Inequalities (\ref{eq30})--(\ref{eq35}) completes the proof of the case (ii.c).

	\end{itemize}
	
	\noindent{\bf Case II.}  
	Assume that        $j<b$ and  $z\geq b$.
	The proof of cases (iii.a)  and (iii.c)  is similar to the proof of case (ii.a) and   the proof  of case  (iii.b)   is similar to the proof of case (ii.b), respectively.\\

	\noindent{\bf Case III.}  
	Assume that      $j\geq b$. 
	If   $  b=k =j  \leq d $,   then similar to the case (ii.a) we prove that  $\gamma^1_{[1,2]} ( k,i:i,j)= \gamma^1_{[1,2]} (   j) +1$ that completes the proof of the case~(iv.a). 
	Assume that the case  $  b=k =j  \leq d $ does  not hold. 
	We distinguish two  cases  depending on  $b\leq c$  or $c<b$. 
	
	\begin{itemize}
		\item [$\bullet$]  Assume that $b\leq c$.  Recall that $b=\mathtt{low}(i) $ and   $c=\mathtt{low}(j) $.    It obtains that each vertex adjacent to $j$ is also  adjacent to  $i$. We see that  $D'=D\setminus\{j\}$  is an    $[1,2]$-dominating  set   of   $G[ 1,i] $    such that $ i\in D'$ and  $x\notin D'$ for all $x\in [k,i)$ and  so $\gamma^1_{[1,2]}(k,i:i)\leq |D'|= \gamma^1_{[1,2]}(k,i:i,j)-1$. This completes the proof of the case (iv.b).
		\item [$\bullet$] Assume that $c<b$.  Recall  that $k\leq j$,  $d =\mathtt{low}(j+1,i-1) $, $d'=\mathtt{low}(b,j-1) $  and $k'=\min\{k, d,d'\}$. Clearly, $k'$ exists. Since $b\leq j$, $k' \leq  b $.  We claim that   all vertices of  $[k',j) $ are not in $D$. 
		If $k'=k$, the  claim is clear.  
		% Assume that $ k'=b$.        By Lemma~\ref{lem24},  at most one vertex of $D$ is in $[b,i)$.       Since       $j\geq b$  and $j\in D$, we obtain that all vertices of  $[b,j) $ are not in $D$. 
		Assume that $ k'=d$. 
		So, $d\leq b$. 
		Recall that $z=\max\{x\in (j ,i ):  d=\mathtt{low}(x) \}$. We obtain that $z\notin D$ is adjacent to $i,j\in D$. Since $D$ is a $[1,2]$-dominating set,   all vertices of  $[d,j) $ are not in $D$. 
		Assume that  $ k'=d'$.  Let $z'\in[b,j-1]$ such that $    \mathtt{low}(z') =d'$.  
		Since $c<b$, we get that $z'\notin D$ is adjacent to $i,j\in D$ and so 
		all vertices of  $[d',j) $ are not in $D$. 
		Hence,   all vertices of  $[k',j) $ are not in~$D$. 
		The proof of the case (iv.c) is similar to the proof of the case   (ii.b). 
		
	\end{itemize}
	
	\noindent This completes the proof of the lemma.  \end{proof}

Similarly, we can prove the following lemmas.

\begin{lemma} \label{lem6}
	Let    $i\in[1,n]$, $j\in[1,i)$,  $k\in[1,j]$,   $b=\mathtt{low}(i) $,   $c=\mathtt{low}(j) $,  
	$d=\mathtt{low}(k) $,        
	$e =\mathtt{low}(j+1,i-1) $,    
	$e'=\mathtt{low}(b,i-1) $ and   
	$z=\max\{x\in (j ,i ):  e=\mathtt{low}(x) \}$.

	\begin{itemize}
		\item[(i)]
		If     $j<b$  and    $j< \mathtt{low} (  x )$  for some $x\in (j, b)$,       $k<\min\{ b,c\} $  and    $k< \mathtt{low} (  x )$  for some $x\in  (k, \min\{ b,c\} )$  or 
		$l<\min\{ b,c,d\} $  and    $l\leq  \mathtt{low} (  x )$  for some $x\in [l, \min\{ b,c,d\} )$, 
		then  $\gamma^1_{[1,2]} ( l,i:i,j,k)$ is not defined,
		\item[(ii)]   assume that    $j<b$  and  $z<b$, 
		
		\begin{itemize}
			\item[(a)]  
			if       $ j  \leq e$,  then   $\gamma^1_{[1,2]} ( l,i:i,j,k)=\gamma^1_{[1,2]} ( l,  j:j,k)+1$,
			\item[(b)]  if   $e<j $  and  $ e'  \leq k$, 
			then  $\gamma^1_{[1,2]} ( l,i:i,j,k)$ is not defined,
			\item[(c)]   
			if   $e<j $  and  $ e'  > k$, 
			then   $\gamma^1_{[1,2]} ( l,i:i,j,k)=\gamma^1_{[1,2]} ( \min\{l,e\},j: j,k)+1$,  
		\end{itemize} 
		\item[(iii)]  assume that    $z\geq b$,
		\begin{itemize}
			\item[(a)]
			
			if      $e\leq k $,  then   $\gamma^1_{[1,2]} ( l,i:i,j,k) $  is not defined,
			\item[(b)] 
			if      $e>k$,   then   $\gamma^1_{[1,2]} ( l,i:i,j,k)=\gamma^1_{[1,2]} ( l,j: j,k)+1$. 
		\end{itemize}

	\end{itemize}

\end{lemma}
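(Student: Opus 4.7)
The plan is to follow the template of Lemma~\ref{lem5}, now adapted to an additional forced vertex $k \in D$. Let $D$ be a minimum $[1,2]$-dominating set of $G[1,i]$ with $i, j, k \in D$ and $x \notin D$ for every $x \in [l,i) \setminus \{j, k\}$. I would split on the position of $j$ relative to $b = \mathtt{low}(i)$ and on the position of $z$ relative to $b$ exactly as in Lemma~\ref{lem5}, and in each subcase either exhibit an infeasibility witness or prove the claimed equality by the standard add-or-remove-$i$ reduction.

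For the infeasibility statements the witnesses are explicit. In case (i), any $x$ in the indicated range has $\mathtt{low}(x)$ too large to be reached by $i$, $j$, or $k$, and every vertex of $[l,i) \setminus \{j, k\}$ is forbidden from $D$, so $x$ would be undominated. In case (ii.b), the condition $e' \leq k$ together with $j < b$ produces some $w \in [b, i-1]$ with $\mathtt{low}(w) \leq k$; then $w \in [l, i) \setminus \{j, k\}$ is adjacent to $i$ (since $w \geq b$) and to both $j$ and $k$ (since $\mathtt{low}(w) \leq k \leq j$), so $w \notin D$ would acquire three neighbors in $D$, violating the $[1,2]$ bound. Case (iii.a) uses the same three-neighbor obstruction with $z$ itself: $z \geq b$ and $\mathtt{low}(z) = e \leq k$ make $z$ simultaneously adjacent to $i$, $j$, and $k$.

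For the feasible subcases I would argue both inequalities by the add-or-remove-$i$ device already used in Lemmas~\ref{lem4} and~\ref{lem5}. Given a minimum $D$, removing $i$ yields a $[1,2]$-dominating set of $G[1,j]$ still containing $j, k$ and respecting the forbidden interval prescribed on the right-hand side; this works because the only way $i$ could have influenced a vertex below $b$ is via a shared neighbor in $[b, i]$, and the case hypotheses ensure that each such intermediate vertex is still correctly dominated after $i$ is removed. Conversely, given a minimum $S$ on $G[1,j]$, the set $S \cup \{i\}$ is a valid $[1,2]$-dominating set of $G[1,i]$ provided every $x \in (j, i)$ ends up with one or two neighbors in $S \cup \{i\}$, and the surviving hypotheses ($e > k$ in (ii.c) and (iii.b), and $j \leq e$ in (ii.a)) are precisely what preclude $x$ from being adjacent to all three of $i, j, k$. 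In case (ii.c) the shift of the first parameter from $l$ to $\min\{l, e\}$ inside $\gamma^1_{[1,2]}(\cdot, j : j, k)$ accounts for the possibility that a vertex of $[e, l)$ must lie in $D$ to dominate some $x \in (j, i)$ not adjacent to $j$.

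The main obstacle is controlling the $[1,2]$ upper bound in the presence of three forced dominators: a single $x \in (j, i) \setminus D$ simultaneously adjacent to $i$, $j$, and $k$ breaks the add-$i$ direction of the reduction, and this is exactly the obstruction diagnosed in (ii.b) and (iii.a). I would isolate the observation ``under the surviving case hypotheses, no vertex of $(j, i) \setminus D$ has three neighbors in $\{i, j, k\}$'' as a short internal claim and invoke it uniformly wherever the $[1,2]$ upper bound must be checked; with that claim in hand, the remaining verifications are routine extensions of the corresponding arguments in Lemma~\ref{lem5}.
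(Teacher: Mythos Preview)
Your plan is essentially the paper's own: the authors give no separate proof of Lemma~\ref{lem6} but simply write ``Similarly, we can prove the following lemmas,'' meaning the argument is the obvious adaptation of Lemma~\ref{lem5} with one more forced vertex $k$, carried out via the same remove-$i$/add-$i$ reduction and the same case split on $j$ versus $b$ and $z$ versus $b$. Your infeasibility witnesses for (i), (ii.b), (iii.a) are exactly right.

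One small slip to fix before you write it out: your explanation of the parameter shift in (ii.c) is backwards. Replacing $l$ by $\min\{l,e\}$ in $\gamma^1_{[1,2]}(\cdot,j:j,k)$ \emph{enlarges} the forbidden interval, so it does not ``account for the possibility that a vertex of $[e,l)$ must lie in $D$''; it forbids those vertices. The correct reason is an over-domination constraint, not an under-domination one: when $e<l$ the witness $z<b$ with $\mathtt{low}(z)=e$ is adjacent to both $j$ and $k$ (since $e<l\le k\le j$) but not to $i$, so the $[1,2]$ upper bound at $z$ already forces $[e,l)\cap D=\emptyset$ in the forward direction, and in the backward direction this extra exclusion is what prevents any $v\in(j,i)$ from acquiring a third $D$-neighbor beyond $\{i,j,k\}$ after you add $i$. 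With that corrected, the rest of your outline goes through verbatim.
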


\begin{lemma} \label{lem7}
	Let     $b=\mathtt{low}(i) $,   $c=\mathtt{low}(j) $,  
	$d=\mathtt{low}(k) $,        
	$e =\mathtt{low}(j+1,i-1) $,    
	$e'=\mathtt{low}(b,i-1) $ and   
	$z=\max\{x\in (j ,i ):  e=\mathtt{low}(x) \}$.

	\begin{itemize}
		\item[(i)]
		If    either  $j<b$  and    $j< \mathtt{low} (  x )$  for some $x\in (j, b)$  or        $k<\min\{ b,c\} $  and    $k< \mathtt{low} (  x )$  for some $x\in  (k, \min\{ b,c\} )$,
		then  $\gamma^{11}_{[1,2]} ( l,i:i,j,k)$ is not defined,
		\item[(ii)]   assume that    $j<b$  and  $z<b$, 
		
		\begin{itemize}
			\item[(a)]  if     $e<k$  or  $ e'  \leq k$, 
			then  $\gamma^{11}_{[1,2]} ( l,i:i,j,k)$ is not defined,
			
			\item[(b)]  
			if       $ j  \leq e$ and $k-l=1$,  then   $\gamma^{11}_{[1,2]} ( l,i:i,j,k)=\gamma^1_{[1,2]} ( l,  j:j,k,l)+1$,
			\item[(c)]  
			if       $ j  \leq e$ and $k-l>1$,  then   $\gamma^{11}_{[1,2]} ( l,i:i,j,k)=\gamma^{11}_{[1,2]} ( l,  j:j,k,k-1)+1$,
			\item[(d)]   
			if   $k\leq e<j $,   $ e'  > k$ and $k-l=1$,
			then   $\gamma^{11}_{[1,2]} ( l,i:i,j,k)=\gamma^1_{[1,2]} ( l,j: j,k,l)+1$, 
			\item[(e)]   
			if   $k\leq e<j $,   $ e'  > k$ and $k-l>1$,
			then   $\gamma^{11}_{[1,2]} ( l,i:i,j,k)=\gamma^{11}_{[1,2]} ( l,j: j,k,k-1)+1$, 
		\end{itemize} 
		\item[(iii)]  assume that    $z\geq b$,
		\begin{itemize}
			\item[(a)]
			
			if      $e\leq k $,  then   $\gamma^{11}_{[1,2]} ( l,i:i,j,k) $  is not defined,
			\item[(b)] 
			if      $e>k$  and  $k-l=1$,   then   $\gamma^{11}_{[1,2]} ( l,i:i,j,k)=\gamma^1_{[1,2]} ( l,j: j,k,l)+1$,
			\item[(c)] 
			if      $e>k$  and  $k-l>1$,   then   $\gamma^{11}_{[1,2]} ( l,i:i,j,k)=\gamma^{11}_{[1,2]} ( l,j: j,k,k-1)+1$,  
			
		\end{itemize}

	\end{itemize}

\end{lemma}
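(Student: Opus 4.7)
The plan is to mirror the strategy used in Lemma~\ref{lem5} and Lemma~\ref{lem6}: case analysis on the relative positions of $l, k, j$ with respect to the auxiliary quantities $b, c, d, e, e', z$, coupled with a ``peel off $i$'' reduction that takes a minimum set $D$ realizing $\gamma^{11}_{[1,2]}(l,i:i,j,k)$ and produces $D' = D \setminus \{i\}$ as a minimum set for a matching quantity on $G[1,j]$, together with the converse step that adds $i$ back to a minimum set on $G[1,j]$.

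First I would dispose of the infeasibility cases (i), (ii.a), and (iii.a). In each of these, I would exhibit a witness vertex $x$ (in $(j,b)$, in $(k, \min\{b,c\})$, or equal to $z$) that, under the stated inequalities, either has no dominator in $D$ or is adjacent to at least three forced members of $D$ from $\{i,j\} \cup [l,k]$. For instance, in case (ii.a) with $e < k$, the vertex $z$ satisfies $z < b$ so it is not adjacent to $i$, but it is adjacent to $j$ and to every vertex of $[\max(e,l), k] \subseteq [l,k]$; when $l < k$ this forces $|N(z) \cap D| \geq 3$, contradicting $[1,2]$-domination. The $e' \leq k$ sub-case and case (iii.a) with $e \leq k$ follow by the same argument applied to a vertex of $[b, i-1]$, while case (i) inherits directly from the corresponding infeasibility arguments of Lemmas~\ref{lem5}--\ref{lem6}.

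For each feasible sub-case I would take a minimum $D$ and set $D' = D \setminus \{i\}$. The hypotheses on $e, e', z$ ensure that every vertex of $(k, i)$ previously dominated by $i$ retains at least one dominator in $\{j\} \cup ([l,k] \cap D')$, so $D'$ is a valid $[1,2]$-dominating set of $G[1,j]$ with $j \in D'$, $[l,k] \subseteq D'$, and $y \notin D'$ for all $y \in (k, j)$. The partition $k-l=1$ versus $k-l>1$ then selects the target recursive quantity: when $k-l=1$ the block collapses to $\{l,k\}$, matching the definition of $\gamma^{1}_{[1,2]}(l,j:j,k,l)$, whereas when $k-l>1$ the block extends strictly below $k-1$, matching $\gamma^{11}_{[1,2]}(l,j:j,k,k-1)$. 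Minimality of $D$ then gives $|D'| = |D| - 1 \geq $ the corresponding right-hand side, yielding one inequality direction.

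For the converse, taking a minimum $S$ realizing the relevant $\gamma^1$ or $\gamma^{11}$ quantity on $G[1,j]$ and setting $S' = S \cup \{i\}$, the verification reduces to checking $1 \leq |N(y) \cap S'| \leq 2$ for every $y \in (k,i] \setminus D'$. The hypothesis $j \leq e$ in the $z<b$ regime (resp.\ $e>k$ in the $z \geq b$ regime) guarantees that no $y \in (j, i-1)$ is adjacent to any vertex of $[l, k-1]$, so the block contributes at most $\{k\}$ to $N(y) \cap S'$, keeping the dominator count in $\{1,2\}$. The main obstacle will be the bookkeeping across the twelve sub-cases, especially the subtle interaction between the block $[l,k]$, the isolated dominator $j$ separating the block from $i$, and the feasibility conditions distinguishing the $j \leq e$ and $k \leq e < j$ regimes in case~(ii); particular care is needed to ensure that every recursive invocation lands on a $\gamma^1$ or $\gamma^{11}$ instance already computed by the dynamic program.
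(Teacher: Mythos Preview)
Your proposal is correct and matches the paper's approach: the paper does not give an explicit proof of Lemma~\ref{lem7} but only states ``Similarly, we can prove the following lemmas'' after the proof of Lemma~\ref{lem5}, so your plan to mirror the peel-off-$i$ / add-back-$i$ argument of Lemmas~\ref{lem5}--\ref{lem6} with case analysis on $b,e,e',z$ is exactly what is intended. Your treatment of the infeasibility sub-cases via an over-dominated witness vertex and your reduction of the feasible sub-cases to the $\gamma^{1}$ versus $\gamma^{11}$ quantities on $G[1,j]$ according to whether $k-l=1$ or $k-l>1$ are the right ingredients; the only minor imprecision is that in the $j\le e$ regime the block $[l,k]$ actually contributes \emph{nothing} (not ``at most $\{k\}$'') to $N(y)\cap S'$ for $y\in(j,i-1)$, but this does not affect the conclusion.
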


Now, we are ready to present our algorithm (Algorithm \ref{alg1}) 
for computing   $\gamma_{[1,2]}(G)$, where $G$ is an  interval graph.
Note that  if $12\in E$, then  the case (iv.b) of   Lemma \ref{lem5}      holds and so      $   \gamma^1_{[1,2]} (1,2:2 )<\gamma^1_{[1,2]} ( 1,2:2,1)$. By considering   Lemmas  \ref{lem2},   \ref{lem3},   \ref{lem14},     \ref{lem4},  \ref{lem5}  and   \ref{lem6}  
we see that  if  we use $\gamma^1_{[1,2]} ( 1,2:2,1)$ for computing a value, then we also use   $   \gamma^1_{[1,2]} (1,2:2 )$ for computing that value. Hence, in Line $12$ we set  $\gamma^1_{[1,2]} ( 1,2:2,1)$ to  $ \infty$.   
We also use this note in Line $21$ for computing  $\gamma^1_{[1,2]} ( k,i:i,j)$.

\renewcommand{\algorithmicrequire}{\textbf{Input:}}
\renewcommand{\algorithmicensure}{\textbf{Output:}}
\begin{algorithm}[h!]
	\caption{\sc  [1,2]-Domination-Interval-Graph$(G )$}
	\label{alg1}
	\algorithmicrequire {An interval graph $G=(V,E)$  of order $n$.}\\
	\algorithmicensure{   $\gamma_{[1,2]}(G)$.}
	\begin{algorithmic}[1]
		\State	Compute a numbering  $(1,2,\ldots , n)$ of vertices of $G$ satisfying the condition  of Corollary~\ref{theo:RR}.
		\State	Compute $\mathtt{ low}(i) $  for all $i\in [1,n]$.
		\State	Compute $\mathtt{ low}( j,i) $ for all $1\leq j<i\leq n$.
		\State	Compute $\mathtt{maxlow}(i) $   for all $i\in [1,n]$.
		\State	$  \gamma^0_{[1,2]}(1)\leftarrow \infty$.
		\State	If   $12\in  E $, then  $  \gamma^0_{[1,2]}(2)\leftarrow 1$, otherwise, $  \gamma^0_{[1,2]}(2)\leftarrow \infty$.
		\State	$  \gamma^0_{[1,2]}(1,2)\leftarrow \infty$. 
		\State	If   $12\in  E $, then  $  \gamma^0_{[1,2]}(1,2:1)\leftarrow 1$, otherwise, $  \gamma^0_{[1,2]}(1,2:1)\leftarrow \infty$.
		\State	$  \gamma^1_{[1,2]}(1)\leftarrow 1$.
		\State	If   $12\in  E $, then  $  \gamma^1_{[1,2]}(2)\leftarrow 1$, otherwise, $  \gamma^1_{[1,2]}( 2)\leftarrow 2$.
		\State	If   $12\in  E $, then   $  \gamma^1_{[1,2]}(1,2:2)\leftarrow 1$, otherwise,  $  \gamma^1_{[1,2]}(1,2:2)\leftarrow  \infty$.
		\State	$  \gamma^1_{[1,2]}(1,2:2,1)\leftarrow    \infty$.
		\State	$  \gamma^{11}_{[1,2]}(1,3:3,2,1)\leftarrow  3$.
		\For{$i=3$  to $n$}	
		\For {$1\leq j < i$}
		\State			Compute $\gamma^0_{[1,2]}(j,i)$ using Lemma \ref{lem2}.
		\State			Compute $\gamma^1_{[1,2]}(j,i:i ) $ using Lemma \ref{lem4}.
		\EndFor
		\For {$1\leq k\leq j< i$}
		\State	Compute $\gamma^0_{[1,2]}(k,i:j)$ using Lemma \ref{lem3}.
		\State	Compute $\gamma^1_{[1,2]}(k,i:i,j)$ using Lemma \ref{lem5}.
		\EndFor
		\For {$1\leq l<k'< j< i$}
		\State		Compute $\gamma^1_{[1,2]}(l,i:i,j,k')$ using Lemma \ref{lem6}.
		\State	Compute $\gamma^{11}_{[1,2]}(l,i:i,j,k')$ using Lemma \ref{lem7}.
		\EndFor
		\State	Compute $\gamma^0_{[1,2]}(i )=\gamma^0_{[1,2]}(i,i)$ using Lemma \ref{lem2}.
		\State	Compute $\gamma^1_{[1,2]}(i ) $ using Lemma \ref{lem14}.
		\EndFor
		\State	\Return    $\min\{\gamma^0_{[1,2]}(n),\gamma^1_{[1,2]}(n )\}$.
	\end{algorithmic}
\end{algorithm}

\begin{theorem}\label{theo1}
	Let $G=(V,E)$ be an interval graph of order $n$. 
	Algorithm \ref{alg1} computes    $\gamma_{[1,2]}(G)$  in $O(n^4)$ time. 
\end{theorem}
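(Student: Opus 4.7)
The plan is to prove correctness by induction on the outer loop index $i$, and then bound the running time by inspecting each loop tier. I will treat correctness and complexity separately and keep the overall structure of the proof aligned with the numbered lines of Algorithm~\ref{alg1}.

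For correctness, the key observation is that every recurrence in Lemmas~\ref{lem2}--\ref{lem7} expresses a $[1,2]$-domination quantity on $G[1,i]$ in terms of quantities on $G[1,i']$ for some $i' < i$ (or on auxiliary entries indexed by pairs $(j,i)$, $(k,i:j)$, $(l,i:i,j,k)$ whose ``critical'' vertex has index $\le i-1$). Thus I would first verify that the evaluation order in Algorithm~\ref{alg1} respects this dependency: when processing index $i$ in the outer loop, every table entry referenced by Lemmas~\ref{lem2}, \ref{lem3}, \ref{lem14}, \ref{lem4}, \ref{lem5}, \ref{lem6}, \ref{lem7} has already been filled in during a previous outer iteration (for indices strictly less than $i$) or earlier in the current iteration (within one of the three inner loops in the correct order $j$, then $(k,j)$, then $(l,k',j)$). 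After that dependency check, correctness follows from induction on $i$: the base cases in Lines~5--13 are verified directly from the definitions of $\gamma^0_{[1,2]}$, $\gamma^1_{[1,2]}$, $\gamma^{11}_{[1,2]}$ on $G[1,1]$, $G[1,2]$ and $G[1,3]$, and the inductive step applies the corresponding lemma to each new entry.

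For the final return value, I would argue as follows. Any $[1,2]$-dominating set $D$ of $G=G[1,n]$ either contains the vertex $n$ or does not. In the former case $|D| \ge \gamma^1_{[1,2]}(n)$ and in the latter $|D| \ge \gamma^0_{[1,2]}(n)$, and conversely the sets realising these two minima are themselves $[1,2]$-dominating sets of $G$. So $\gamma_{[1,2]}(G) = \min\{\gamma^0_{[1,2]}(n),\gamma^1_{[1,2]}(n)\}$, which is exactly what Line~30 returns. One delicate point, which I would flag explicitly, is the special assignment $\gamma^1_{[1,2]}(1,2:2,1)\leftarrow \infty$ in Line~12 (and the analogous convention referenced for Line~21): the recurrences of Lemmas~\ref{lem5}--\ref{lem7} invoke $\gamma^1_{[1,2]}(k,i:i,j)$ only in combinations where, whenever this would be finite, the competing value $\gamma^1_{[1,2]}(k,i:i)$ is also evaluated and is no larger (by case~(iv.b) of Lemma~\ref{lem5}); setting it to $\infty$ therefore does not lose any optimum. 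I would record this as a short separate claim and use it to conclude the induction.

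For the running time, I would go through the algorithm line by line. The initial numbering (Line~1) costs $O(|V|+|E|)=O(n^2)$ by Lemma~\ref{theo:RR}. Computing $\mathtt{low}(i)$ and $\mathtt{maxlow}(i)$ for all $i$ (Lines~2 and 4) is $O(n^2)$, and all values $\mathtt{low}(j,i)$ for $1\le j<i\le n$ (Line~3) can be filled in $O(n^2)$ by the relation $\mathtt{low}(j,i)=\min\{\mathtt{low}(j,i-1),\mathtt{low}(i)\}$. Inside the outer loop, the first inner loop (Lines~15--18) runs over $O(n)$ indices $j$; a single evaluation of Lemma~\ref{lem2} takes a minimum over $O(n^2)$ tuples $(j,k)$, while Lemma~\ref{lem4} is $O(n)$, giving $O(n^3)$ per outer iteration. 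The second inner loop (Lines~19--22) has $O(n^2)$ pairs $(k,j)$ and each application of Lemma~\ref{lem3} or Lemma~\ref{lem5} costs $O(n)$, again $O(n^3)$. The third inner loop (Lines~23--26) has $O(n^3)$ triples $(l,k',j)$ and each application of Lemmas~\ref{lem6} and~\ref{lem7} is $O(1)$ lookups, so $O(n^3)$. Summing over the $n$ outer iterations yields $O(n^4)$ total, which matches the claim.

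The main obstacle I anticipate is the bookkeeping for the correctness step: one must check that in every case branch of Lemmas~\ref{lem2}--\ref{lem7} the referenced entries have been computed earlier in the evaluation order and that the special ``undefined'' branches (encoded here as $\infty$) never compromise a genuine optimum. Once that is done the remainder is a direct induction followed by the complexity count sketched above.
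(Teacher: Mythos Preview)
Your proposal is correct and follows essentially the same approach as the paper: correctness is delegated to Lemmas~\ref{lem2}--\ref{lem7} together with the observation that $\gamma_{[1,2]}(G)=\min\{\gamma^0_{[1,2]}(n),\gamma^1_{[1,2]}(n)\}$, and the $O(n^4)$ bound comes from the loop structure. In fact your write-up is more careful than the paper's own proof, which simply asserts the running time of each block without the per-lemma cost breakdown you give, and your explicit treatment of the dependency order and of the $\infty$ convention in Line~12 fills in details the paper leaves implicit.
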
 

\begin{proof}
	Let    $( 1, 2,\ldots, n)$ be a numbering of vertices of  $G$ 
	computed in Line 1.   This numbering  of vertices  of   $G=(V,E)$  can be computed   in $O(|V|+|E|)$  time~\cite{ramalingam1988unified}. 
	We need  $O(n^3)$ time to 
	compute $\mathtt{ low}(i) $  for all $i\in [1,n]$, 
	$\mathtt{ low}( j,i) $ for all $1\leq j<i\leq n$ and 
	$\mathtt{maxlow}(i) $   for all $i\in [1,n]$ in Lines 2-4. 
	It is easy to compute  the values  
	$  \gamma^0_{[1,2]}(1) $,  $  \gamma^0_{[1,2]}(2) $, 
	$  \gamma^0_{[1,2]}(1,2) $,  
	$  \gamma^0_{[1,2]}(1,2:1) $,
	$  \gamma^1_{[1,2]}(1) $, 
	$  \gamma^1_{[1,2]}(2) $, 
	$  \gamma^1_{[1,2]}(1,2:2) $,
	$  \gamma^1_{[1,2]}(1,2:2,1) $   and 
	$  \gamma^{11}_{[1,2]}(1,3:3,2,1) $
	as seen in Lines 5-13 in $O(1)$ time. 
	Let $i\in [3,n]$, $j\in [1,i)$,  $k\in [1,j]$, $k'\in [2,j)$ and $l\in [1,k')$.         
	By Lemmas  \ref{lem2},   \ref{lem3},   \ref{lem14},     \ref{lem4},  \ref{lem5},  \ref{lem6} and  \ref{lem7},
	compute $\gamma^0_{[1,2]}(i ) $,               
	$\gamma^0_{[1,2]}(j,i)$,
	$\gamma^0_{[1,2]}(k,i:j)$,               
	$\gamma^1_{[1,2]}(i ) $,               
	$\gamma^1_{[1,2]}(j,i:i ) $,  
	$\gamma^1_{[1,2]}(k,i:i,j)$,   
	$\gamma^1_{[1,2]}(k,i:i,j,k')$ and  $\gamma^{11}_{[1,2]}(k,i:i,j,k')$   
	in Lines 15-29 We see that the running time of   Lines 14-29 is $O(n^4)$. 
	Hence, the running time of Algorithm \ref{alg1} is  $O(n^4)$. 
	Clearly,  $\gamma_{[1,2]} (G) =\min\{\gamma^0_{[1,2]}(n),\gamma^1_{[1,2]}(n)\}$. This completes the proof of the theorem. \end{proof}  

\section{$[1,2]$-domination in Circle Graphs}\label{Split Graph}

In this section, we prove hardness results for the $[1,2]$-domination problem in circle graphs. 
We start with the main result of this section.
\\
\begin{theorem}\label{Reduction}
	The $[1,2]$-domination problem is $NP$-complete in circle graphs.
\end{theorem}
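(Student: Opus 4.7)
The plan is to establish NP-completeness in two stages. Membership in NP is immediate: given a candidate set $D\subseteq V$ with $|D|\le k$, I verify in $O(|V|+|E|)$ time that every vertex in $V\setminus D$ has between $1$ and $2$ neighbors in $D$ by a single pass over the adjacency lists. The bulk of the argument is the hardness reduction.

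For NP-hardness, I plan to reduce from $3$-SAT, or from a convenient restriction such as monotone $3$-SAT or exactly-$1$-in-$3$ SAT whose ``at least one true, not too many true'' flavor already matches $[1,2]$-domination. Given a formula $\varphi$ with variables $x_1,\dots,x_n$ and clauses $C_1,\dots,C_m$, I would build a chord diagram — equivalently an interval overlap representation — for a circle graph $G_\varphi$ together with a budget $k$, so that $\varphi$ is satisfiable if and only if $G_\varphi$ admits a $[1,2]$-dominating set of size at most $k$. The construction uses three kinds of gadgets arranged around the circle: a variable gadget for each $x_i$ whose minimum $[1,2]$-dominating set must select chords encoding one consistent truth value; a clause gadget for each $C_j$ with a distinguished clause chord that is $[1,2]$-dominated iff at least one of its literal chords lies in $D$; and connector chords that propagate the selected truth value of a literal to every clause in which it appears. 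The budget $k$ is set to $n$ plus a fixed per-clause contribution forced by the clause gadgets.

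The key obstacle is the tension between the upper bound of $2$ in the $[1,2]$ constraint and the ``at least one true literal'' requirement of a satisfied $3$-clause: a naive clause gadget whose clause chord intersects all three literal chords directly would fail $[1,2]$-domination whenever the assignment sets all three literals to true. I plan to handle this by splitting each clause into pairwise sub-gadgets, or equivalently by inserting mediator chords, so that no non-dominated vertex ever sees more than two literal chords at once, while still guaranteeing that satisfaction of the clause corresponds to $[1,2]$-domination of the clause gadget. A secondary obstacle is realizability as a single chord diagram; I intend to handle it by placing variable gadgets in disjoint arcs of the circle and routing the connector chord of each literal through exactly the clause gadgets in which that literal occurs, which is feasible because circle graphs accommodate arbitrary interleaving patterns of chords.

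Once the gadgets are fixed, the forward implication follows by selecting the chords dictated by a satisfying assignment, and the reverse implication follows from a local case analysis inside each variable and clause gadget: the forced gadget structure lets one read off a truth assignment from any size-$k$ $[1,2]$-dominating set and then verify, clause by clause, that every $C_j$ is satisfied.
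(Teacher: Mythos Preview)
Your outline shares the paper's framework---membership in NP is immediate, hardness is by a polynomial reduction from $3$-SAT via a chord diagram with literal/clause gadgets and consistency links---and you correctly isolate the new difficulty specific to $[1,2]$-domination, namely the upper bound of~$2$. Where you diverge is in the gadget architecture. The paper is \emph{clause-centric}: each occurrence of a literal in a clause $C_j$ gets its own pair of truth chords $t_j^l,f_j^l$ (forced into $D$ one-at-a-time by a private pendant pair $p_j^l$), the ``at least one literal true'' condition is enforced not by a single clause chord but by three auxiliary pairs $a_j^l$ together with a carefully designed family of $(u,w)$ chords and two guard-plus-claw devices, and consistency between two occurrences of the same variable is handled by a \emph{pairwise} connection chord (of type $tt,tf,ft,ff$) equipped with its own pendant claw so that it is dominated once by the claw center and at most once by the relevant $t/f$ chord. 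The resulting budget is $k=7m+2t$, not $n$ plus a per-clause constant.

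Your variable-centric plan, with a single connector chord per literal routed through every clause containing it, has a concrete gap: when that literal is set to false the connector is not in $D$, yet it crosses many clause gadgets, each of which will contribute at least one selected chord adjacent to it, so the connector acquires far more than two neighbours in $D$ and the $[1,2]$ constraint fails. The paper's pairwise connection chords with attached claws are precisely the device that prevents this over-domination. Relatedly, the sentence ``circle graphs accommodate arbitrary interleaving patterns of chords'' is not true in the sense you need (circle graphs have forbidden induced subgraphs), so realizability must be argued explicitly from the concrete placement of endpoints, as the paper does. In short, the plan is headed in the right direction but the gadgets are not yet specified at a level that survives the $[1,2]$ upper bound; the paper's per-occurrence $t/f$ chords, the $a$/$(u,w)$ machinery inside each clause, and the pairwise claw-anchored consistency chords are the missing ingredients.
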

\begin{proof} Clearly, the $[1,2]$-domination problem is in $NP$. To prove that the problem is $NP$-hard, We shall reduce (in polynomial time) the $3SAT$ problem to the problem of finding a $[1,2]$-dominating set. Let $X=\{x_1,x_2,\dots,x_n\}$ be the set of boolean variable and $C=\{C_1,C_2, \dots, C_m\}$ be the set of clauses in an arbitrary instance of $3SAT$, like $I$. In the first part, we construct a circle graph $G(I)$ based on the instance $I$, and in the second part, we choose an integer $k$ and show $G(I)$ has a  $[1,2]$-dominating set of size $k$ if and only if $I$ is satisfiable. Let us describe the construction of $G(I)$ produced in the reduction from $3SAT$ to the $[1,2]$-domination problem. Without loss of generality, we assume that every variable appears once in each clause. We choose an arbitrary point of the circle as the origin. The circle graph $G(I)$ is defined as follows:
	
	{\bf First Step: Constructed Gadget for Clause $C_j$.}
	\begin{itemize}
		\item[$\bullet$] We divide the circle into $m$ disjoint open intervals $[s_j, s'_j]$ for $1\leq j\leq m$, corresponding to each clause,
		\begin{figure}[!ht]
			\centering
			\includegraphics[scale=.4]{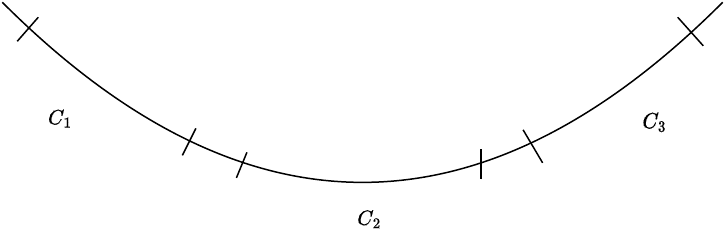}
			\caption{Intervals corresponding to each clause} 
		\end{figure}
		\item[$\bullet$] We create two chords $t_j^l$ and $f_j^l$ associated with each literal $l$, $1\leq l\leq 3$, that appears in clause $C_j$.
		Including the chord $t_j^l$ in a $[1,2]$-dominating set will correspond to setting the associated literal to true and including the $f_j^l$ chord in a $[1,2]$-dominating set will correspond to setting the associated literal to false,
		\item[$\bullet$] For each literal in clause $C_j$, we create a pair of chords $a_j^l$,
		\item[$\bullet$] For each literal in clause $C_j$, we create a pair of chords $p_j^l$. The purpose of each such pair is to dominate them by including the chords $t_j^l$ or $f_j^l$,
		\item[$\bullet$] For each literal in clause $C_j$, the interval between the endpoint of $t_j^l$ and the endpoint of the first chord of $a_j^l$ is called the truth interval, and the interval between the right endpoint of the second chord of $p_j^l$ and the right endpoint of $f_j^l$ is called false interval for literal $l$ of clause $C_j$. (See Figure~\ref{Clause})
		\begin{figure}[!ht]
			\centering
			\includegraphics[scale=.7]{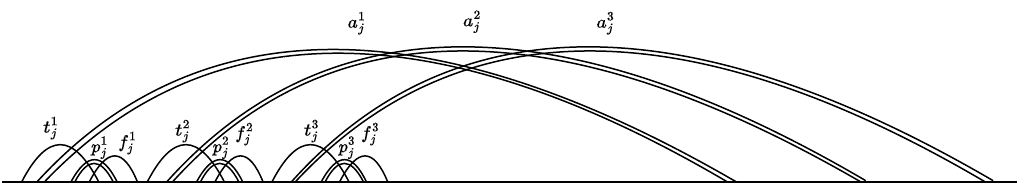}
			\caption{Part of constructed gadget for clause $C_j$} \label{Clause}
		\end{figure}
		
		\item[$\bullet$] For each clause $C_j$, we create three pairs of chords of types $u$ and $w$.
		Based on the endpoints of $a_j^l$ pairs in clause $C_j$, we have pairs $(u_{12},w_{12})$,$(u_{13},w_{13})$ and $(u_{23},w_{23})$. Also, we have two guard chords, $g_1$ and $g_2$, connected to the two claw graphs centered at $c_1$ and $c_2$, respectively. The purpose of each will be discussed later. 
		\begin{itemize}
			\item The chord $u_{13}$ is incident with only the pair $a_j^1$ and $w_{13}$  is incident with only the pair $a_j^3$,
			\item The left point of $u_{12}$ is before pair $a_j^1$ and endpoint of it is before the pair $a_j^3$. So, the chord $u_{12}$  is incident with the pairs $a_j^1$ and $a_j^2$. The chord $w_{12}$ starts before the endpoint $u_{13}$ and after crossing the chord $u_{13}$, the pairs $a_j^2$, and the chord $w_{13}$ will be finished,
			\item The left point of $u_{23}$ is after pair $a_j^1$ and endpoint of it is after the pair $a_j^3$. So, the chord $u_{23}$  is incident with the pairs $a_j^2$ and $a_j^3$. The chord $w_{23}$ starts before the endpoint $u_{13}$ and also before the start point of $w_{12}$. After crossing the chord $u_{13}$, the pairs $a_j^2$, and the chords $w_{12},w_{13}$ will be finished, 
			\item The guard chord $g_1$, from the left endpoint is connected to a chord $c_1$ which three independent chords cross. The purpose of these independent chords is to ensure that $c_1$ is included in any minimum $[1,2]$-dominating set. Therefore, the guard chord $g_1$ is always dominated by $c_1$ once. The right endpoint of $g_1$ will be finished after the start point of $u_{23}$. So, $g_1$  be incident with the chords $ u_{12}, u_{13}$, and $u_{23}$,
			\item The left endpoint of $g_2$ starts before the left endpoint of $u_{12}$ and finishes after the right endpoint of $w_{13}$. Thus, $g_2$ be incidence with the chords $ u_{12}, u_{23}$, and $w_{13}$. The guard chord $g_2$, from the right endpoint, is connected to a chord $c_2$ which is crossed by three independent chords. Similar to chord $g_1$, the purpose of these independent chords is to ensure that $c_2$ is included in any minimum $[1,2]$-dominating set. So, the guard chord $g_2$ is always dominated by $c_2$ once. 
			
		\end{itemize}
		
		(see Fig. \ref{Clause2} for an illustration)
		
		\begin{figure}[!ht]
			\centering
			\includegraphics[scale=.8]{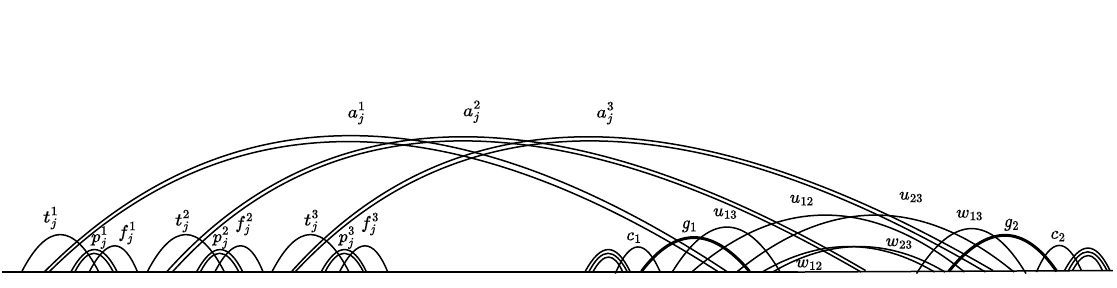}
			\caption{Part of constructed gadget for clause $C_j$} \label{Clause2}
		\end{figure}
	\end{itemize}
	
	The key idea of our proof is as follows. Suppose that $C_j$ is satisfied by some assignment of truth values to the variables. Then we can construct a $[1,2]$-dominating set that contains at least one chord $t_j^l$ corresponding to a true literal in $C_j$ and a pair of $u$ and $w$ chords associated with $C_j$. The chord $t_j^l$ dominates the pairs $a_j^l$ while the other pairs $a_j^l$ and $(u,w)$'s  are dominated by an appropriate pair of $(u ,w)$. To ensure that exactly one of the chords  $t_j^l$  and $f_j^l$ associated with literal $l$ in clause $j$ appears in any $[1,2]$-dominating set, a pair of independent chords $ p_j^l$  associated with literal $l$ is adjacent to both  $t_j^l$  and $f_j^l$ and no other chords. In other words, since the independent chords $p_j^l$'s start from the left endpoint of  $f_j^l$ and end at the right endpoint $t_j^l$, to dominate them, $t_j^l$ or $f_j^l$ or both chords in $p_j^l$ must be select. Later we will argue that any minimum $[1,2]$-dominating set must contain exactly one of the chords  $t_j^l$ or $f_j^l$ otherwise both of $p_j^l$ must be selected, which would lead to a larger dominating set.

	The clause $C_j$ is satisfied if and only if at least one of three literals involved in $C_j$ is true. As we mentioned,  this will correspond to at least one of the chords $t_j^l$ appearing in the dominating set. So this chord will dominate $t_j^l$, $f_j^l$, and both pairs $a_j^l$ and $p_j^l$. To dominate the remaining pairs $a_j$ associated with other literals in $C_j$ we use six chords of type $u$ and $w$ as follows. 
	$u_{13}$ and $w_{13}$ cross the pairs $a_j^1$ and $a_j^3$, respectively. If the second literal is true that leads to $t_j^2$ being in the $[1,2]$-dominating set, two pairs  $a_j^1$ and $a_j^3$ remain undominated. Hence, selecting $u_{13}$ and $w_{13}$ dominates them.
	
	If the first literal is true the pairs  $a_j^2$ and $a_j^3$ are not dominated. Although by choosing $u_{23}$ both pairs $a_j^2$ and $a_j^3$ are dominated, the chords $w_{12}$ and $w_{23}$ remain undominated. As they cross each other, we choose one of them. Similarly, If the third literal is true the pairs  $a_j^1$ and $a_j^2$ are not dominated. Although by choosing $u_{12}$ both pairs $a_j^1$ and $a_j^2$ are dominated, the chords $w_{12}$ and $w_{23}$ remain undominated. As they cross each other, we choose one of them.

	{\bf Second Step: Consistency Condition}
	
	In the above construction, we obtain $m$ pairwise disjoint sections, each associated with one clause. In the second part of our construction, we use connection chords that force a variable to have the same truth value throughout all clauses. This is known as the consistency condition. we create connection chords $tt$, $ff$, $tf$, and $ft$ that connect the clauses as follows. Thus, the consistency of the truth values of literals throughout all clauses will be preserved.
	\begin{itemize}
		\item If variable $x_i$ appears as a positive or negative literal in two clauses $C_j$ and $C_k$, we create a chord $tf$ between two sections associated with clauses $C_j$ and $C_k$ such that the left endpoint of this chord is in the truth interval of literal $x_i$ in clause $C_j$ and its right endpoint is in the false interval of $C_k$. Also, we add a chord $ft$ such that the left endpoint of this chord is in the false interval of literal $x_i$ in the clause $C_j$ and its right endpoint is in the truth interval of $C_k$.
		\item If variable $x_i$ appears as a positive literal in clause $C_j$ and as a negative literal clause $C_k$ or vice versa, we create a chord $tt$ between two sections associated with clauses $C_j$ and $C_k$ such that the left endpoint of this chord is in the truth interval of literal $x_i$ in clause $C_j$ and its right endpoint is in truth interval of $C_k$. Also, we add a chord $ff$ such that the left endpoint of this chord is in the false interval of literal $x_i$ in clause $C_j$ and its right endpoint is in the false interval of $C_k$.
		\item The left endpoints of each connection chord intersect with a claw. In other words, three independent chords connect with another chord to a connection chord.
	\end{itemize}
	Figure \ref{ConnectClauselast} shows all chords associated with two clauses and four variables.
	
	\begin{figure}[!ht]
		\centering
		\includegraphics[scale=.5 ]{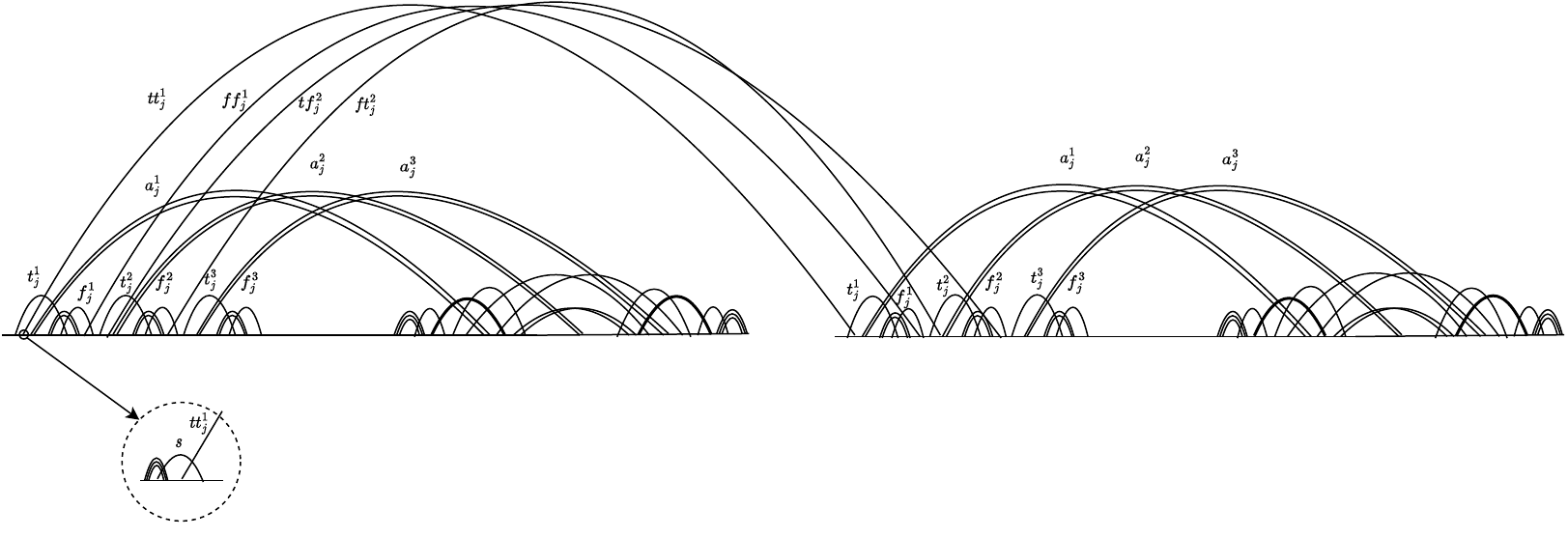}
		\caption{Representation of chords between the clauses $(\bar{x}_1 \vee x_2 \vee x_3)\wedge (x_1 \vee x_2 \vee x_4)$} \label{ConnectClauselast}
	\end{figure}
	
	Let $k = 7m+2t$ where $m$ is the number of clauses and $t$ is the total number of repetitions of variables in clauses.  Theorem \ref{Reduction} directly follows from the following result.

	\begin{lemma}\label{Lemma1}
		$G(I)$ has a $[1,2]$-dominating set $D$ of cardinality at most $k=7m+2t$ if the instance $I$ is satisfiable.
	\end{lemma}
	
	\begin{proof}
		Given a satisfying truth assignment for $I$, we show how to construct a $[1,2]$-dominating set $D$ of size $k=7m+2t$.  Let us consider that the clause $C_j$ contains three literals corresponding to variables $x_i,x_k$, and $x_l$ where $i\leq k \leq l$. We assume the variable $x_i$ appears in the first literal and $x_k,x_l$ are the second and third literals in $C_j$, respectively. Since $I$ is satisfied, at least one of the literals $x_i,x_k$, and $x_l$ must be true. 
		Assume without loss of generality, that the first literal of $C_j$ is true and the second and third literals are false according to the assignment. So we have included $t_j^l,f_j^2$, and $f_j^3$  in $D$. This ensures that all the pairs $p_1,p_2$, and $p_3$ have been dominated. Also, the pair $a_j^1$ is dominated, but the pairs $a_j^2$ and $a_j^3$ are not dominated. To dominate them and all pairs $(u,w)$ in this clause, we put $u_{23},w_{23}$ to $D$. Thus, to dominate this clause, we need at least five chords. Additionally, the chords $c_1$ and $c_2$ must be put in $D$ to dominate the three independent chords crossing them.
		
		Generally, for each literal $l$ in clause $C_j$ that is true in the assignment of $I$, we include the chord(s) $t_j^l$  in $D$  where $l$ is the literal number in clause $C_j$. Otherwise, we include in $D$ the chord(s) $f_j^l$. The chords $t_j^l$ dominate the pair $a_j^l$. To dominate the remaining $a_j$ pairs and the pairs $(u,w)$ in $C_j$, at least one of $(u,w)$ pairs must be chosen. Additionally, the chords $c_1$ and $c_2$ must be included in $D$. 
		
		For each chord $tt,tf,ft$, and $ff$, the chord $s$ connected to them is put in dominating set so all these chords are dominated once. otherwise, we must choose three independent chords connected to $s$, which would lead to $s$ being dominated three times, resulting in a contradiction. Therefore, for each chord $tt, tf, ft$, and $ff$, we need one chord in the dominating set. According to the construction, for each variable repeated in two clauses, we have two connection chords. If the total number of variable repartition is $t$, we include $2t$ chords in $D$. It is clear that $D$ is a $[1,2]$-dominating set.
		
	\end{proof}
	
	\begin{lemma}\label{Lemma2}
		The $3SAT$ instance $I$ is satisfiable if $G(I)$ has a $[1,2]$-dominating set of cardinality at most $k=7m+2t$.
	\end{lemma}
	
	\begin{proof}
		Let $D$ of size $k=7m+2t$ be the given $[1,2]$-dominating set. We show there is a truth assignment to the variables in $X$ that satisfies $I$. Recall that associated with each literal $l$ that appears in a clause $C_j$, there is a pair of chords $p_j^l$. Since no chord is adjacent to two different pairs $p$'s, $D$ must contain, for each $l$ and $j$, either $t_j^l$ or $f_j^l$ or both chords in $p$. If for each literal $l$ in a clause $j$, $D$ contains $t_j^l$ or $f_j^l$(but not both), then exactly $3m$ type $t$ or $f$ chord in $D$ are sufficient to dominate all pairs of chords $p$. Otherwise, if $D$ contains any additional pairs of $p$ more than $3m$ chords are necessary to dominate chords in pairs $p$. Also, all $c_1$'s and $c_2$'s, totally $2m$, chords are in $D$ to dominate the three crossed chords connected to them.
		
		There is no chord in $G(I)$ adjacent to $(u,w)$ pairs associated with different clauses. Thus the set of chords in $D$ which dominate the $(u,w)$ pairs in different clauses are pairwise disjoint. Also, no chord in $G(I)$ is adjacent to all the three pairs $(u,w)$ in a clause. Therefore $D$ must contain at least two chords per clause to dominate all $(u,w)$ pairs in a clause. Combined with the argument in the previous paragraph, this means exactly $3m$ chords in $D$ dominate $p$ pairs and $2m$ chords dominate the $(u,w)$ pairs. This further implies that for each variable $x_i$ appearing in a clause $C_j$, either $t_j^l$ or $f_j^l$ belongs to $D$.
		
		Suppose there are three literal $l_1,l_2$, and $l_3$ in clause $C_j$, and assume none of $t_j^1,t_j^2$, and $t_j^3$ are included in $D$ (i.e., all $f$ chords are selected). In this case, all pairs $a_j^1,a_j^2$, and $a_j^3$ are not dominated. There is no single $(u,w)$ pair that can dominate all three pairs $a_j$'s. So at least one $t$ chord must be selected to dominate one $a_j$ pair, and the other remaining $a_j$'s will be dominated by the appropriate pair $(u,w)$.
		
		Until now, we have shown that for each clause $C_j$, $D$ contains at least one $t$ chord and one pair $(u,w)$. Based on the construction of the gadget, each chord $tt,tf,ft$, and $ff$ are attached to a claw chord, centered at $s$. So, all the chords $s$ must be in domination and as we have $2t$ such chords, then the total size of our $[1,2]$-domination is equal to $|D|=7m+2t$. Now, we can construct a satisfying truth assignment for $I$ as follows. If $D$ contains a $t$ chord associated with a literal involving $x_i$, then we set this literal(and the corresponding value to its variable) to true in $I$. Similarly,  $D$ contains a $f$ chord associated with a literal involving $x_i$, then we set this literal to false in $I$.
	\end{proof}
	
	The proof of theorem \ref{Reduction} follows directly from Lemma \ref{Lemma1} and \ref{Lemma2}.
\end{proof}

\section*{}	
{\bf Conflict of Interests:} The authors declare that they have no known competing financial interests or personal relationships that could have appeared to influence the work reported in this paper.

\section{Conclusion}\label{Conclusion} 
Applying a constraint that determines the maximum number of times a vertex outside the dominating set, like $D$, is dominated by the elements inside $D$, can either make the problem easier or harder than the  domination problem. In this paper, we first designed a polynomial algorithm for the  $[1,2]$-dominating problem in proper interval graphs.  Although $O(n^4)$ might seem large, it provides an affirmative answer to the open problem regarding the complexity class of the $[1,2]$-Domination problem in non-proper interval graphs. Designing a more efficient algorithm for this problem remains an interesting open problem. Next, we have shown that the $[1, 2]$-Dominating Set problem is $NP$-complete on circle graphs. 
It is thus natural to ask for which subclass of circle graphs, like distance-hereditary or k-polygon, the problem has a polynomial-time algorithm. Meanwhile, it would be desirable to show the complexity of $[1,j]$-dominating problem on other classes of graphs marked with a question mark in Figure~\ref{Graphs1}.

\nocite{*}
\bibliographystyle{abbrvnat}
% use the following instead if you encounter problems 

\label{sec:biblio}

\end{document}